\newtheorem{theorem}{Theorem}
\newtheorem{lemma}{Lemma}
\newtheorem{proposition}{Proposition}
\newcommand{\figwidth}{8}
\begin{document}
%
\title{Millimeter-Wave NOMA with User Grouping, Power Allocation and Hybrid Beamforming}

%
%
%
\author{Lipeng Zhu, ~\IEEEmembership{Student Member,~IEEE,}
        Jun Zhang,
        Zhenyu Xiao,~\IEEEmembership{Senior Member,~IEEE,}
        Xianbin Cao,~\IEEEmembership{Senior Member,~IEEE,}
        Dapeng Oliver Wu,~\IEEEmembership{Fellow,~IEEE}
        and Xiang-Gen Xia,~\IEEEmembership{Fellow,~IEEE}
\thanks{L. Zhu, Z. Xiao and X. Cao are with the School of
Electronic and Information Engineering, Beihang University, Beijing 100191, China. \{zhulipeng@buaa.edu.cn, xiaozy@buaa.edu.cn, xbcao@buaa.edu.cn\}.}
\thanks{J. Zhang is with the Advanced Research Institute of Multidisciplinary Science, Beijing Institute of Technology, Beijing, 100081, China. \{buaazhangjun@vip.sina.com\}.}
\thanks{D. O. Wu is with the Department of Electrical and Computer Engineering, University of Florida, Gainesville, FL 32611, USA. \{dpwu@ufl.edu\}.}
\thanks{X.-G. Xia is with the Department of Electrical and Computer Engineering, University of Delaware, Newark, DE 19716, USA. \{xianggen@udel.edu\}.}
}

%
%

\maketitle

\begin{abstract}
This paper investigates the application of non-orthogonal multiple access in millimeter-Wave communications (mmWave-NOMA). Particularly, we consider downlink transmission with a hybrid beamforming structure. A user grouping algorithm is first proposed according to the channel correlations of the users. Whereafter, a joint hybrid beamforming and power allocation problem is formulated to maximize the achievable sum rate, subject to a minimum rate constraint for each user. To solve this non-convex problem with high-dimensional variables, we first obtain the solution of power allocation under arbitrary fixed hybrid beamforming, which is divided into intra-group power allocation and inter-group power allocation. Then, given arbitrary fixed analog beamforming, we utilize the approximate zero-forcing method to design the digital beamforming to minimize the inter-group interference. Finally, the analog beamforming problem with the constant-modulus constraint is solved with a proposed boundary-compressed particle swarm optimization algorithm. Simulation results show that the proposed joint approach, including user grouping, hybrid beamforming and power allocation, outperforms the state-of-the-art schemes and the conventional mmWave orthogonal multiple access system in terms of achievable sum rate and energy efficiency.
\end{abstract}

\begin{IEEEkeywords}
mmWave communications, NOMA, user grouping, power allocation, hybrid beamforming.
\end{IEEEkeywords}

%
\IEEEpeerreviewmaketitle

\section{Introduction}
\IEEEPARstart{M}{illimeter}-wave (mmWave) communication has been proposed as one of the candidate key techniques for the fifth-generation (5G) wireless communications and beyond \cite{andrews2014will,niu2015survey,rapp2013mmIEEEAccess,XiaoM2017survmmWave}. The abundant spectrum (30-300GHz) in mmWave-band can provide great potentials to meet the requirements of high data rates and low transmission latency. Due to the high path loss, large antenna array is usually utilized in mmWave communications, where beamforming techniques are required to increase the spectrum efficiency \cite{XiaoM2017survmmWave,xiao2017mmWaveFD,andrews2016modeling}. Fully digital beamforming (DBF) is one of the signal processing approaches in baseband \cite{Rusek2013MIMO,Gao2017MIMO}, where each antenna is driven by an independent radio frequency (RF) chain, and multiple data streams can be transmitted simultaneously. However, the DBF architecture results in unaffordable hardware cost and energy consumption in the mmWave-band with large antenna array \cite{Daill2017}. In contrast, analog beamforming (ABF), where the antennas share only one RF chain, is an energy-efficient alternative \cite{Ding2017random,xiao2016codebook}. However, one RF chain can support only one data stream in general, which limits the spectrum efficiency. In consideration of the compromission between energy efficiency and spectrum efficiency, hybrid analog and digital beamforming (HBF) was proposed and preferred \cite{Gao2016hyb,Daill2017,Dai2018MIMONOMA}. With a small number of RF chains connected to a large number of antennas, beam gain and interference management can be achieved simultaneously.

One of the typical application scenarios for the 5G wireless communications is the massive connectivity. However, for mmWave communications with the conventional orthogonal multiple access (OMA) schemes, such as time division multiple access (TDMA), code division multiple access (CDMA), orthogonal frequency division multiple access (OFDMA), and space division multiple address (SDMA), the number of the users for each data stream in the same time-frequency-code-space resource block (RB) is one \cite{Daill2017,Dai2018MIMONOMA,xiao2018mmWaveNOMA,Zhu2018UplinkNOMA}. Thus, the total number of served users is limited, which is no more than the number of RF chains in each RB \cite{Daill2017,Dai2018MIMONOMA,xiao2018mmWaveNOMA,Zhu2018UplinkNOMA}. To address this problem, non-orthogonal multiple access (NOMA) was proposed to combine with mmWave communications \cite{Ding2017random,Daill2017,Ding2017survNOMA,Dai2018NOMAsurvey}. In contrast to the conventional OMA schemes, NOMA can transmit the signals for different users in the same RB, while distinguishing them in the power domain. By employing superposition coding at the transmitter and successive interference cancellation (SIC) at the receiver, the users with different channel conditions can be served simultaneously. The number of served users in the same time-frequency-code-space RB can be improved manyfold \cite{Benjebbour2013ConceptNOMA,Dai2015NOMA5G,Ding2017survNOMA,Dai2018NOMAsurvey,Choi2014NOMA,Zhu2018NOMAPSO}. Note that the implementation of NOMA does not result in extra delay caused by channel estimation and feedback compared with OMA \cite{sun2018delay}. Although the SIC at the receiver brings in supererogatory computation for demodulation and decoding at the NOMA user, the corresponding latency in the physical layer is negligible compared with the delay in the network layer. The performance analysis of NOMA for Ultra-Reliable and Low-Latency Communications (URLLC) has been investigated in \cite{Amjad2018delay}, where grant-free NOMA with short-packet communications has significantly reduced the latency and improved the reliability for URLLC to support the time-critical applications. Moreover, it has been verified that NOMA with short-packet communications can significantly outperform OMA by achieving a higher effective throughput with the same latency requirement \cite{sun2018delay}. Besides, several schemes have been proposed to realize the tradeoff between the capacity (or energy efficiency) and the delay \cite{Choi2017delay,Ding2018delay,Ning2019delay}. The analysis and optimization of the delay for NOMA are beyond the scope of this paper.

It has been verified that applying NOMA in mmWave communications (mmWave-NOMA) can significantly improve the throughput capacity compared with mmWave-OMA \cite{Ding2017random,Daill2017,Dai2018MIMONOMA,xiao2018mmWaveNOMA,Zhu2018UplinkNOMA}. Due to the directional feature of mmWave transmission, it is ideal for the users whose channels are highly correlated to perform NOMA. There are several prior works on mmWave-NOMA with ABF. Using random ABF, mmWave-NOMA could outperform mmWave-OMA in terms of outage sum rates, respecting to a targeted data rate of the strong user \cite{Ding2017random}. In \cite{xiao2018mmWaveNOMA}, a 2-user downlink mmWave-NOMA scenario with ABF was considered. A joint Tx beamforming and power allocation problem was formulated and solved to maximize the achievable sum rate (ASR), subject to a minimum rate constraint for each user. In \cite{Zhu2018UplinkNOMA}, a joint Rx beamforming and power control problem was solved in a 2-user uplink mmWave-NOMA system. Furthermore, a joint Tx-Rx beamforming and power allocation problem was solved for $K$-user downlink mmWave-NOMA in \cite{Zhu2018NOMAPSO}. The closed-form optimal power allocation and Rx beamforming were obtained under arbitrary fixed Tx beamforming, and a boundary-compressed particle swarm optimization (BC-PSO) algorithm was proposed to solve the ABF problem with the constant modulus (CM) constraint.

In addition, mmWave-NOMA with HBF was also investigated in several literatures. In \cite{Daill2017}, a new transmission scheme of beamspace multiple-input multiple-output NOMA (MIMO-NOMA) was proposed, where the number of users can be larger than the number of RF chains. Based on the equivalent-channel hybrid precoding scheme, an iterative algorithm was developed to obtain the optimal power allocation for the users. In \cite{Dai2018MIMONOMA}, a user grouping algorithm and an HBF algorithm were proposed for mmWave-MIMO-NOMA system with simultaneous wireless information and power transfer. Then, the optimization for power allocation and power splitting factors was operated to maximize the ASR. The optimal power allocation and user scheduling were obtained with the branch and bound approach in \cite{Cui2018mmWaveNOMA}, where HBF is random and fixed. In \cite{Wu2017hybridBF}, the authors considered the problems of user pairing, hybrid beamforming and power allocation separately in an mmWave-NOMA system. In \cite{Zhang2017mmWaveMIMONOMA}, a capacity analysis for the integrated NOMA-mmWave-massive-MIMO systems was provided based on a simplified mmWave channel model. In \cite{Wei2018NOMA}, a multi-beam NOMA framework for hybrid mmWave systems was proposed, where a beam splitting technique was introduced to generate multiple analog beams to facilitate the NOMA transmission.

In this paper, we investigate mmWave-NOMA with HBF structures. Different from the works above, we consider user grouping and jointly optimize HBF and power allocation. Particularly, we consider a single-cell downlink system, where the base station (BS) is equipped with a large antenna array, and serves multiple single-antenna users. The contributions of this paper are summarized as follows\footnote{In our previous work \cite{Zhu2018NOMAPSO}, a mmWave-NOMA system with the pure analog beamforming structure was considered, where several key problems for HBF structure were not included, e.g., the user grouping, digital beamforming, inter-group interference suppression and power allocation among different groups, which bring new challenges for the multi-group mmWave-NOMA system.}.
\begin{enumerate}
  \item To implement NOMA in mmWave communications with HBF, we propose a user grouping algorithm first, where K-means algorithm is utilized and the normalized channel correlation is defined as the measure. The users with high channel correlation are assigned to the same group, while the users with low channel correlation are assigned to different groups, which can significantly mitigate the interference between different groups of users. Then, a problem jointly optimizing power allocation and HBF is formulated to maximize the ASR of the users, subject to a minimum rate constraint for each user.
  \item We obtain a sub-optimal solution of the power allocation problem under arbitrary and fixed HBF. Since the power allocation problem is non-convex, we divide it into two sub-problems, i.e., intra-group power allocation (intra-GPA) and inter-group power allocation (inter-GPA). Significantly, we prove the proposed solution of power allocation is globally optimal under ideal beam pattern (i.e., no inter-group interference).
  \item We design the HBF matrix to suppress the inter-group interference as well as maximize the ASR. In the proposed solution, DBF is designed by using the approximate zero-forcing (AZF) method under arbitrary and fixed ABF. Then, substituting the obtained power allocation and DBF as the function of the ABF matrix, we utilize the boundary-compressed particle swarm optimization (BC-PSO) algorithm to solve the ABF problem, which realizes the joint optimization of power allocation and HBF.
  \item We evaluate the performance of the proposed user grouping, power allocation and HBF algorithm for mmWave-NOMA through simulations. The simulation results show that the proposed solution is significantly better than those of state-of-the-art schemes and the conventional mmWave-OMA system in terms of ASR. The energy-efficiency (EE) performance of the proposed mmWave-NOMA scheme with an HBF structure outperforms the fully digital MIMO structure\footnote{In the simulation, we compare the ASR/EE performance of mmWave-NOMA with the scheme proposed in \cite{Dai2018MIMONOMA}, which is regarded as the benchmark.}. The ASR of the proposed solution is close to the ideal case with no inter-group interference, which demonstrates that the designed HBF can significantly achieve low inter-group interference.
\end{enumerate}

The rest of the paper is organized as follows. In Section II, we present the system model. In Section III, we first propose the user grouping algorithm and formulate the problem. Then, we provide a solution of power allocation with an arbitrary fixed HBF in Section IV. In Section V, we design DBF and ABF. In Section VII, we summarize the complete solution and provide the computational complexity. Simulation results are given to demonstrate the performance of the proposed solution in Section VII, and the paper is concluded finally in Section VIII.

Symbol Notation: $a$, $\mathbf{a}$, $\mathbf{A}$ and $\mathcal{A}$ denote a scalar, a vector, a matrix and a set, respectively. $(\cdot)^{\rm{T}}$, $(\cdot)^{\rm{H}}$ and $(\cdot)^{\dag}$ denote transpose, conjugate transpose and pseudo inverse, respectively. $|a|$ and $\|\mathbf{a}\|$ denote the absolute value of $a$ and Frobenius norm of $\mathbf{a}$, respectively, while $|\mathcal{A}|$ denotes the number of elements in set $\mathcal{A}$. $\mathbb{E}(\cdot)$ denotes the expectation operation. $[\mathbf{a}]_i$, $[\mathbf{A}]_{i,:}$, $[\mathbf{A}]_{:,j}$ and $[\mathbf{A}]_{i,j}$ denote the $i$th entry of $\mathbf{a}$, the $i$th row, the $j$th column, and the entry in the $i$th row and the $j$th column of $\mathbf{A}$, respectively. $\mathbf{I}_K$ is the $K \times K$ identity matrix and $\Phi$ denotes the empty set.

\section{System Model}
\subsection{System model}
\begin{figure}[t]
\begin{center}
  \includegraphics[width=8.8 cm]{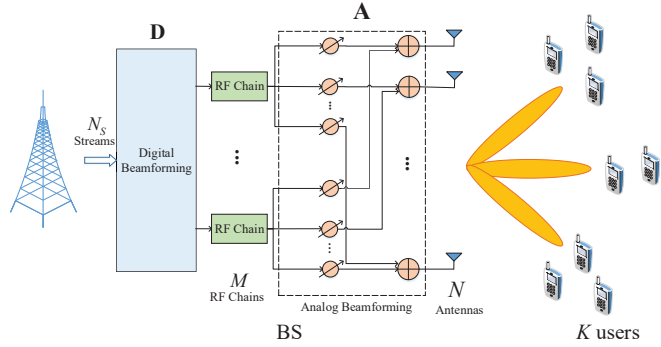}
  \caption{Illustration of the architecture of the BS, which is equipped with $M$ RF chains and $N$ antennas.}
  \label{fig:system}
\end{center}
\end{figure}
In this paper, we consider a single-cell downlink mmWave-NOMA system. The BS is equipped with HBF structure, where $N$ antennas share $M$ RF chains. $K$ single-antenna users are served simultaneously, where $K>M$. The architecture of the BS is shown in Fig. \ref{fig:system}, which is a fully connected HBF structure\footnote{It is worthy of noting that the proposed approach in this paper can also be directly used for the partially connected HBF structure \cite{Dai2018MIMONOMA}.}. $N_{S}$ data streams in the baseband are precoded by the DBF matrix $\mathbf{D}^{M\times N_{S}}$. After passing through the corresponding RF chain, the digital-domain signal from each RF chain is delivered to $N$ phase shifters (PSs) to perform ABF. Thus, the ABF matrix is $\mathbf{A}^{N\times M}$.

In order to achieve a higher multiplexing gain, the number of data streams is assumed to be equal to the number of RF chains in this paper, i.e., $N_{S}=M$. Thus, the $K$ users should be first scheduled into $M$ groups, and each group is corresponding to an independent data stream. The users in the same group can perform NOMA and implement SIC, while the signals from different groups of users are treated as interference. The details of user grouping will be shown later. Denote the user set of the $m$th group as $\mathcal{G}_{m}$. As a result, we have $\mathcal{G}_{i}\cap \mathcal{G}_{j}=\Phi$ for $i\neq j$ and $\sum \limits_{m=1}^{M} |\mathcal{G}_{m}|=K$, where $|\mathcal{G}_{m}|$ denotes the number of users for $\mathcal{G}_{m}$. Since $M$ RF chains can support $M$ data streams at most, there should be at least one user in each group to avoid the idleness of the RF resource, and thus we have $|\mathcal{G}_{m}|\geq 1$. Then, the received signal for the $n$th user in the $m$th group is
\begin{equation}
y_{m,n}=\mathbf{h}_{m,n}^{\rm{H}}\mathbf{ADPs}+u_{m,n},
\end{equation}
where $\mathbf{h}_{m,n}$ with $N \times 1$ dimension is the channel response vector between the BS and the $n$th user in the $m$th group. $u_{m,n}$ is the Gaussian white noise at the user with average power $\sigma^2$. $\mathbf{s}^{K \times 1}$ is the vector of the transmission signals, where $\mathbf{s}=[s_{1,1}, \cdots,s_{1,|\mathcal{G}_{1}|},\cdots,s_{M,1}, \cdots,s_{M,|\mathcal{G}_{M}|}]^{\mathrm{T}}$ and $E(\mathbf{ss}^{\rm{T}})=\mathbf{I}_{K}$, and $\mathbf{P}$ is the $M \times K$ power allocation matrix: $\mathbf{P}=\mathrm{diag}\{\mathbf{p}_1,\mathbf{p}_2,\cdots,\mathbf{p}_M\}$ and $\mathbf{p}_m=[\sqrt{p_{m,1}},\sqrt{p_{m,2}},\cdots,\sqrt{p_{m,|\mathcal{G}_{m}|}}]$. $\mathbf{D}$ is the DBF matrix. $\mathbf{A}$ is the ABF matrix with the CM constraint of \cite{xiao2018mmWaveNOMA,Zhu2018UplinkNOMA,Zhu2018NOMAPSO}
\begin{equation}
|[\mathbf{A}]_{i,j}|=\frac{1}{\sqrt{N}},~1\leq i \leq N,~1\leq j \leq M.
\end{equation}

We define the HBF matrix as
\begin{equation}
\mathbf{W}=\mathbf{AD}=[\mathbf{w}_{1},\mathbf{w}_{2},\cdots,\mathbf{w}_{M}].
\end{equation}
Since we separate the transmission power from HBF, it is without loss of generality to assume that each column of the HBF matrix has a unit norm, i.e.,
\begin{equation}
\|\mathbf{w}_{m}\|=1,~1\leq m \leq M.
\end{equation}

Subject to limited scattering in the mmWave band, multipath is mainly caused by reflection. As the number of the multipath components (MPCs) is small in general, the mmWave channel has directionality and appears spatially sparse in the angle domain \cite{peng2015enhanced,wang2015multi,Lee2014exploiting,Gao2016ChannelEst,xiao2016codebook,alkhateeb2014channel}. Different MPCs have different angles of departure (AoDs) and angles of arrival (AoAs). Without loss of generality, we adopt the directional mmWave channel model assuming a uniform linear array (ULA) with a half-wavelength antenna spacing. For the $N\times 1$ channel response vector $\mathbf{h}_{m,n}$, we adopt the widely used Saleh-Valenzuela channel for mmWave communications \cite{Ding2017random,Daill2017,Dai2018MIMONOMA}, which is\footnote{Since we concentrate on the user grouping and resource allocation for mmWave-NOMA, the channel estimation problem is beyond the scope of this paper. We assume that the channel state information (CSI) between the BS and the users is known by the BS. A number of approaches on mmWave channel estimation have been proposed and could be referred, such as, \cite{xiao2016codebook,Kokshoorn2017ChannelEstim,xiao2018codebook,Hu2018channelEstim}.}
\begin{equation} \label{eq_oriChannel}
\mathbf{h}_{m,n}=\sum_{\ell=1}^{L_{m,n}}\lambda_{m,n}^{(\ell)}\mathbf{a}(N,\theta_{m,n}^{(\ell)}).
\end{equation}
Note that for convenience, we denote the channel coefficients in terms of both the indexes $m$ and $n$ in \eqref{eq_oriChannel}, where $m~(1\leq m \leq M)$ represents the $m$th group, and the index $n~(1\leq n \leq |\mathcal{G}_{m}|)$ represents the $n$th user in each group. $\lambda_{m,n}^{(\ell)}$ is the complex coefficient of the $\ell$-th MPC of the channel response vector for the $n$th user in the $m$th group. $\theta_{m,n}^{(\ell)}$, within the range $(-1,1]$, is the cosine of the AoD \cite{balanis2016antenna}. $L_{m,n}$ is the total number of the MPCs. $\mathbf{a}(\cdot)$ is the steering vector functions defined as
\begin{align} \label{eq_steeringVCT}
&\mathbf{a}(\theta)=[e^{j2\pi0(d/\lambda)\theta},e^{j2\pi(d/\lambda)\theta},\cdots,e^{j2\pi(N-1)(d/\lambda)\theta}]^{\mathrm{T}},
\end{align}
which depends on the array geometry. $d$ is the antenna spacing, and $\lambda$ is the signal wavelength. For a half-wavelength antenna spacing array, we have $d=\lambda/2$.

\subsection{Achievable Rate}
In general, the optimal decoding order for NOMA is the increasing order of the users' channel gains\cite{saito2013non,Dai2018NOMAsurvey}. However, for the mmWave-NOMA with HBF structure in this paper, the effective channel gains of the users are determined by both the channel gains and the beamforming gains. Thus, we need to sort the effective channel gains first, and then determine the decoding order. For notational simplicity and without loss of generality, we assume that the order of the effective channel gains in the $m$th group is $|\mathbf{h}^{\mathrm{H}}_{m,1}\mathbf{w}_{m}|^{2} \geq |\mathbf{h}^{\mathrm{H}}_{m,2}\mathbf{w}_{m}|^{2} \geq \cdots \geq |\mathbf{h}^{\mathrm{H}}_{m,|\mathcal{G}_{m}|}\mathbf{w}_{m}|^{2}$ \footnote{We can always define the user with the $n$th highest effective channel gain in the $m$th group as the $n$th user in this group. Thus, this simplified subscript has no influence on the solution in this paper.}, and thus the optimal decoding order is the increasing order of the effective channel gains \cite{xiao2018mmWaveNOMA,Ding2017random,Daill2017}. Therefore, the $n$th user in the $m$th group can decode $s_{m,j}~(n+1 \leq j \leq |\mathcal{G}_{m}|)$ and then remove them from the received signal in a successive manner. The other signals are treated as interference. Thus, the signal to interference plus noise power ratio (SINR) of the $n$th user in the $m$th group can be written as
\begin{equation}\label{eq_SINR}
\gamma_{m,n}=\frac{|\mathbf{h}^{\mathrm{H}}_{m,n}\mathbf{w}_{m}|^{2}p_{m,n}}{|\mathbf{h}^{\mathrm{H}}_{m,n}\mathbf{w}_{m}|^{2}\sum \limits_{j=1}^{n-1}p_{m,j}+\sum \limits_{i\neq m}\sum \limits_{k=1}^{|\mathcal{G}_{i}|}|\mathbf{h}^{\mathrm{H}}_{m,n}\mathbf{w}_{i}|^{2}p_{i,k}+\sigma^{2}}.
\end{equation}

Note that Gaussian signalling is assumed for transmitting data here. As a result, the achievable rate of the $n$th user in the $m$th group is
\begin{equation}\label{eq_Rate}
R_{m,n}=\log_{2}(1+ \gamma_{m,n}).
\end{equation}

Finally, the ASR of the proposed mmWave-NOMA system is
\begin{equation}\label{eq_ASR}
R_{\rm{sum}}=\sum\limits_{m=1}^{M} \sum\limits_{n=1}^{|\mathcal{G}_m|} R_{m,n}.
\end{equation}

Note that in the proposed downlink mmWave-NOMA system, we assume that the CSI between the BS and the users is known by the BS, and thus user grouping, power allocation and beamforming can be accomplished at the BS. The channel-gain information and beamforming-gain information of the other users are not required at the user side. However, compared with the conventional OMA system, information about the decoding order and codebook of the prior users in the same group should be transmitted to each user to accomplish SIC, which results in extra overhead. The amount of overhead depends on the number of users with in the same NOMA group. In the proposed solution of this paper, a great number of users are divided into many NOMA groups, and the number of users within the same NOMA group is usually not large so as to maintain the performance. Hence, the extra overhead is in fact not high, especially in slow varying channel, where the decoding order and codebook are also slow varying, the overhead can be further reduced.

\section{User Grouping and Problem Formulation}
As the number of the users is larger than that of the RF chains, i.e., $K>M$, we need to schedule the user into $M$ groups. To this end, we propose an intuitive algorithm for user grouping first, and then formulate a problem to jointly optimize HBF and power allocation.

\subsection{User Grouping}
Due to the spacial directivity of the mmWave channel, the users whose channels are highly correlated should be assigned to the same group to make full use of the multiplexing gain, while the users whose channels are uncorrelated should be assigned to different groups to decrease the interference. The normalized channel correlation between User $i$ and User $j$ is defined as
\begin{equation}
C_{i,j}=\frac{\mathbf{h}^{\mathrm{H}}_{i}\mathbf{h}_{j}}{\|\mathbf{h}_{i}\|\|\mathbf{h}_{j}\|}.
\end{equation}

We use the K-means clustering algorithm to implement the user grouping, where the normalized channel correlation is defined as the measure \cite{kanungo2002efficient}. First, we select $M$ users randomly, denoted by $\{\Omega_{1},\Omega_{2},\cdots,\Omega_{M}\}$, as the representatives of the $M$ clusters. Then, the other users can be assigned to the cluster according to the normalized channel correlation. For instance, User $k$ should be assigned to the $m^{\star}$th cluster, where
\begin{equation}\label{cluster}
m^{\star}=\mathop{\mathrm{arg~max}}\limits_{1\leq m \leq M} C_{k,\Omega_{m}}.
\end{equation}

After that, the representative of each cluster should be updated. To further decrease the correlation of the channels between different clusters, the representative of each cluster is updated as the one with the lowest correlation with the other clusters. The correlation between a user to the other clusters is defined as the summation of the normalized channel correlation between this user to the users of the other clusters, i.e.,
\begin{equation}
\bar{C}_{k}=\sum \limits^{j \notin \mathcal{G}^{(k)} }_{1 \leq j \leq K}C_{k,j},
\end{equation}
where $\mathcal{G}^{(k)}$ denotes the cluster which includes User $k$, and the representative of the $m$th cluster is updated as
\begin{equation}\label{representative}
\Omega_{m}=\mathop{\mathrm{arg~min}}\limits_{1\leq n \leq |\mathcal{G}_{m}|} \bar{C}_{n},
\end{equation}
where $\mathcal{G}_{m}$ denotes the $m$th cluster. After updating the representative of each cluster, the other users are reassigned to the clusters according to \eqref{cluster}. The iteration is stopped if the representatives of the clusters are unchanged. The details of the proposed user grouping algorithm are summarized in Algorithm \ref{alg_grouping}.

\begin{algorithm}[h]
\caption{User Grouping Algorithm}
\label{alg_grouping}
\begin{algorithmic}[1]
\REQUIRE ~\\
$K$, $M$, $\{\mathbf{h}_{k}\}$, and $\{C_{i,j}\}$.
\ENSURE ~The user grouping scheme: $\{\mathcal{G}_1,\mathcal{G}_2,\cdots,\mathcal{G}_M\}$.\\
\STATE $\mathcal{K}=\{1,2,\cdots,K\}$.
\STATE Initialize $\Omega_{m}^{(1)}=k_{m} \in \mathcal{K}$ randomly for $m=1,2,\cdots,M$.
\STATE $t=1$.
\WHILE {$\{\Omega_{m}^{(t)}\}\neq \{\Omega_{m}^{(t-1)}\}$}
\STATE Initialize $\mathcal{G}_{m}=\Omega_{m}^{(t)}$ for $m=1,2,\cdots,M$.
\FOR {$k \in \mathcal{K}/\{\Omega_{m}^{(t)}$\} }
\STATE $m^{\star}=\mathop{\mathrm{arg~max}}\limits_{1\leq m \leq M} C_{k,\Omega_{m}^{(t)}}$.
\STATE $\mathcal{G}_{m^{\star}}=\mathcal{G}_{m^{\star}}\bigcup k$.
\ENDFOR
\STATE $t=t+1$.
\STATE Update $\Omega_{m}^{(t)}$ for $m=1,2,\cdots,M$ according to \eqref{representative}.
\ENDWHILE
\RETURN $\{\mathcal{G}_1,\mathcal{G}_2,\cdots,\mathcal{G}_M\}$.
\end{algorithmic}
\end{algorithm}

\subsection{Problem Formulation}
Generally, there are mainly two categories of optimizing the overall rate performance in a communication system. One is to maximize the ASR. However, when maximizing the sum rate, the BS tends to allocate most power and beam gains to the users with the strong channels. Then, the users with the low channel gains can not be served by the BS. The other category is to ensure the user fairness, where the max-min fairness or proportion fairness are considered to improve the performance of the users with worse channel conditions. However, the fairness among the users may result in a performance loss of the sum rate. To realize the tradeoff between the sum-rate performance and the user fairness, we maximize the achievable sum rate while ensuring the minimum achievable rate of each user in this paper, which is also adopted in the related mmWave-NOMA systems \cite{Daill2017,Dai2018MIMONOMA,Cui2018mmWaveNOMA}. Then, the problem is formulated as
\begin{equation}\label{eq_problem}
\begin{aligned}
\mathop{\mathrm{Max}}\limits_{\{p_{m,n}\},\mathbf{A},\mathbf{D}}~~~~ &R_{\rm{sum}}\\
\mathrm{s.t.}~~~~~~~~ &C_1~:~R_{m,n} \geq r_{m,n},~~\forall m,n, \\
&C_2~:~p_{m,n} \geq 0, ~~\forall m,n,\\
&C_3~:~\sum\limits_{m=1}^{M} \sum\limits_{n=1}^{|\mathcal{G}_m|} p_{m,n} \leq P, \\
&C_4~:~|[\mathbf{A}]_{i,j}| = \frac{1}{\sqrt{N}},~~\forall i,j,\\
&C_5~:~\|[\mathbf{AD}]_{:,m}\| = 1,~~\forall m,
\end{aligned}
\end{equation}
where the constraint $C_1$ is the minimum rate constraint for each user. The constraint $C_2$ indicates that the power allocated to each user should be non-negative. The constraint $C_3$ is the total transmission power constraint, where the total power at the BS is no more than $P$. $C_4$ is the CM constraint for the ABF matrix, and $C_5$ is the unit power constraint for the HBF matrix.

The total dimension of the variables in Problem \eqref{eq_problem} is $K+MN+M^2$, which is large in general. Exhaustive search for the optimal solution results in heavy computational load, which is hard to accomplish in practice. To solve Problem \eqref{eq_problem}, there are two main challenges. One is that the optimized variables are entangled with each other, which makes the formulation non-convex. The other is that the expression of $R_{\rm{sum}}$ depends on the decoding order. In general, the optimal decoding order is the increasing order of the users' effective channel gains. However, the order of effective channel gains varies with different beamforming matrixes. In other words, given different HBF matrixes, the objective function in Problem \eqref{eq_problem}, i.e., the ASR of the users, has different expressions. The two challenges make it infeasible to solve Problem \eqref{eq_problem} by using the existing optimization tools. Next, we will propose a sub-optimal solution with promising performance but low computational complexity.

The proposed solution of Problem \eqref{eq_problem} can be obtained with two stages. In the first stage, we provide a low-complexity algorithm to obtain the sub-optimal power allocation with an arbitrary fixed HBF. In the second stage, we design the HBF, where the DBF matrix and the ABF matrix are obtained using the AZF method and the proposed BC-PSO algorithm, respectively.

\section{Solution of Power Allocation}
As we have analyzed before, an essential challenge to solve Problem \eqref{eq_problem} is the variation of the decoding order. However, given an arbitrary fixed ABF matrix ${\bf{A}}$ and an arbitrary fixed DBF matrix ${\bf{D}}$, the order of the effective channel gains is fixed. For notational simplicity and without loss of generality, we assume $|\mathbf{h}^{\mathrm{H}}_{m,1}\mathbf{w}_{m}|^{2} \geq |\mathbf{h}^{\mathrm{H}}_{m,2}\mathbf{w}_{m}|^{2} \geq \cdots \geq |\mathbf{h}^{\mathrm{H}}_{m,|\mathcal{G}_{m}|}\mathbf{w}_{m}|^{2}$ for any $1\leq m\leq M$, where $\mathbf{w}_{m}=[\mathbf{AD}]_{:,m}$. The original problem can be simplified as
\begin{equation}\label{eq_problem2}
\begin{aligned}
\mathop{\mathrm{Max}}\limits_{\{p_{m,n}\}}~~~~ &R_{\rm{sum}}\\
\mathrm{s.t.}~~~~~~ &C_1~:~R_{m,n} \geq r_{m,n},~~\forall m,n, \\
&C_2~:~p_{m,n} \geq 0, ~~\forall m,n,\\
&C_3~:~\sum\limits_{m=1}^{M} \sum\limits_{n=1}^{|\mathcal{G}_m|} p_{m,n} \leq P,
\end{aligned}
\end{equation}
where ${\bf{A}}$ and ${\bf{D}}$ are arbitrary and fixed.

According to the expression of the achievable rate in \eqref{eq_Rate}, a user may suffer the interference from both the intra-group users and the inter-group users. Although the HBF matrix is fixed, the objective function and the constraint $C_1$ of Problem \eqref{eq_problem2} are still non-convex. To address this problem, we divide it into two sub-problems, i.e., intra-GPA and inter-GPA. Define $\sum\limits_{n=1}^{|\mathcal{G}_m|} p_{m,n}=P_{m}$ for $1\leq m\leq M$, which means the allocated power for the $m$th group, and then Problem \eqref{eq_problem2} is equivalent to
\begin{equation}\label{eq_problem3}
\begin{aligned}
\mathop{\mathrm{Max}}\limits_{\{P_{m}\}}\mathop{\mathrm{Max}}\limits_{\{p_{m,n}\}}~~~~ &R_{\rm{sum}}\\
\mathrm{s.t.~~~~~~~}~~~ &C_1~:~R_{m,n} \geq r_{m,n},~~\forall m,n, \\
&C_2~:~p_{m,n} \geq 0, ~~\forall m,n,\\
&C_3~:~\sum\limits_{n=1}^{|\mathcal{G}_m|} p_{m,n} = P_{m}, ~~\forall m,\\
&C_4~:~\sum\limits_{m=1}^{M} P_{m} \leq P,
\end{aligned}
\end{equation}

Note that the introduced inter-GPA variables, i.e., $\{P_{m}\}$, have no influence on the optimality of the power allocation problem, because there is no loss of the degree of freedom in Problem \eqref{eq_problem3} compared with Problem \eqref{eq_problem2}, and Problem \eqref{eq_problem3} is more tractable. First, given arbitrary and fixed inter-GPA, a closed-form sub-optimal intra-GPA can be obtained. Then, substituting the intra-GPA into Problem \eqref{eq_problem3}, we can obtain a sub-optimal inter-GPA solution. Although the proposed solution of power allocation is not globally optimal, we will prove that it is near-to-optimal when the inter-group interference is small through the theoretical analysis and simulation verification.

\subsection{The Intra-GPA Problem}
As shown in \eqref{eq_SINR} and \eqref{eq_Rate}, one user may suffer the interference from the users in the same group and the users in other groups, which are called intra-group interference and inter-group interference, respectively. Considering that HBF can be well designed in general, such that the inter-group interference is small and can be neglected. Thus, we have the following proposition to solve the intra-GPA problem.

\begin{proposition} Given an arbitrary fixed inter-GPA of $\{P_1,P_2,\cdots,P_M\}$, if the inter-group interference can be neglected, the optimal intra-GPA in Problem \eqref{eq_problem3} should always satisfy
\begin{equation}
R_{m,n}=r_{m,n}~(1\leq m \leq M,~2\leq n \leq |\mathcal{G}_m|).
\end{equation}
\end{proposition}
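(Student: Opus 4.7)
The plan is to prove the proposition by a local exchange argument, swapping power between two adjacent users in a group and showing this swap (i) never decreases the sum rate and (ii) cannot be blocked by the power nonnegativity constraint, so that at any optimum the minimum rate constraints for users $n\ge 2$ must bind.

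First I would drop the inter-group interference term in \eqref{eq_SINR}, so that, setting $g_{m,n}\!:=\!|\mathbf{h}_{m,n}^{\mathrm{H}}\mathbf{w}_m|^2$ and $I_{m,n}\!:=\!\sum_{j=1}^{n-1}p_{m,j}$, the rate reduces to $R_{m,n}=\log_2\!\bigl(1+g_{m,n}p_{m,n}/(g_{m,n}I_{m,n}+\sigma^2)\bigr)$, which involves only quantities inside group $m$. Because the inter-GPA $\{P_m\}$ is fixed, the problem decouples across groups and I only need to argue within a single group $m$.

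Next, assume for contradiction that there is an optimal intra-GPA with $R_{m,n^\star}>r_{m,n^\star}$ for some $n^\star\ge 2$, and consider the perturbation $p_{m,n^\star}\!\leftarrow\!p_{m,n^\star}-\delta$, $p_{m,n^\star-1}\!\leftarrow\!p_{m,n^\star-1}+\delta$ with $\delta>0$ small. The crucial bookkeeping step is to check that this swap leaves $R_{m,j}$ unchanged for every $j\notin\{n^\star-1,n^\star\}$: for $j\le n^\star-2$ neither $p_{m,j}$ nor $I_{m,j}$ changes, while for $j\ge n^\star+1$ the interference $I_{m,j}$ contains both $p_{m,n^\star-1}$ and $p_{m,n^\star}$, so the $\pm\delta$ contributions cancel. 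Thus only the pair $(R_{m,n^\star-1},R_{m,n^\star})$ is affected, and the total power of the group is preserved.

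Then I would establish monotonicity for this pair: with $a:=p_{m,n^\star-1}$ and $S:=p_{m,n^\star-1}+p_{m,n^\star}$ fixed, a direct differentiation of $R_{m,n^\star-1}+R_{m,n^\star}$ with respect to $a$ yields
\begin{equation}
\frac{1}{\ln 2}\Bigl[\frac{1}{(a+I_{m,n^\star-1})+\sigma^2/g_{m,n^\star-1}}-\frac{1}{(a+I_{m,n^\star-1})+\sigma^2/g_{m,n^\star}}\Bigr],
\end{equation}
which is nonnegative since the assumed ordering $g_{m,n^\star-1}\ge g_{m,n^\star}$ implies $\sigma^2/g_{m,n^\star-1}\le \sigma^2/g_{m,n^\star}$. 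Hence the swap weakly increases the sum rate of group $m$ (and therefore $R_{\mathrm{sum}}$), while $R_{m,n^\star-1}$ only grows (so its $C_1$ constraint remains satisfied) and the rates outside the pair are untouched. I can keep increasing $\delta$ until either $R_{m,n^\star}=r_{m,n^\star}$ or $p_{m,n^\star}=0$; the latter would force $R_{m,n^\star}=0<r_{m,n^\star}$, contradicting $C_1$. Therefore at the optimum $R_{m,n^\star}=r_{m,n^\star}$, and iterating the argument for every $n\ge 2$ in every group gives the claim.

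The main obstacle is really just the indexing check in the third paragraph—making sure the swap cancels inside the interference sums of users $j\ge n^\star+1$ and does not accidentally disturb the assumed decoding-order $g_{m,1}\ge\cdots\ge g_{m,|\mathcal{G}_m|}$ (it does not, since the $g_{m,n}$'s depend only on $\mathbf{w}_m$, which is held fixed). Once that bookkeeping is nailed down, the monotonicity calculation and the binding-constraint conclusion are routine.
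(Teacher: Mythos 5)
Your proof is correct, and it takes a genuinely different route from the paper: the paper's entire argument is to observe that neglecting inter-group interference decouples Problem \eqref{eq_problem3} into $M$ independent single-group problems and then to cite the prior work \cite{Zhu2018NOMAPSO} for the fact that the optimal intra-GPA saturates the rate constraints of users $n\ge 2$, whereas you give a self-contained exchange argument. Your bookkeeping is right: shifting $\delta$ from user $n^\star$ to user $n^\star-1$ leaves $I_{m,j}$ and hence $R_{m,j}$ untouched for $j\le n^\star-2$ and cancels inside $I_{m,j}$ for $j\ge n^\star+1$, and your derivative of $R_{m,n^\star-1}+R_{m,n^\star}$ at fixed pair-sum is correct and nonnegative precisely because of the assumed ordering $g_{m,n^\star-1}\ge g_{m,n^\star}$; the termination argument via $p_{m,n^\star}=0\Rightarrow R_{m,n^\star}=0<r_{m,n^\star}$ also holds. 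The one caveat is that when $g_{m,n^\star-1}=g_{m,n^\star}$ the swap only \emph{weakly} increases the objective, so strictly speaking you establish that \emph{some} optimal intra-GPA has all constraints for $n\ge2$ binding rather than that every optimum does; in the generic case of strictly ordered effective channel gains the increase is strict and the contradiction is genuine. What your approach buys is a proof readable without chasing the reference; what the paper's approach buys is brevity and reuse of an already-established result.
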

\begin{proof}
If the inter-group interference is small and can be neglected, Problem \eqref{eq_problem3} can be divided into $M$ independent intra-GPA problems. For the $m$th group, the intra-GPA problem is simplified as
\begin{equation}\label{eq_problem4}
\begin{aligned}
\mathop{\mathrm{Max}}\limits_{\{p_{m,n}\}}~~~~ &\sum \limits _{n=1}^{|\mathcal{G}_{m}|}R_{m,n}\\
\mathrm{s.t.~~~}~~~ &C_1~:~R_{m,n} \geq r_{m,n},~~\forall n, \\
&C_2~:~p_{m,n} \geq 0, ~~\forall n,\\
&C_3~:~\sum\limits_{n=1}^{|\mathcal{G}_{m}|} p_{m,n} = P_{m},
\end{aligned}
\end{equation}
which is a power allocation problem without inter-group interference. This problem has been solved in \cite{Zhu2018NOMAPSO}, where the optimal power allocation always satisfies $R_{m,n}=r_{m,n} ~(2\leq n \leq |\mathcal{G}_m|)$.
\end{proof}

By solving the equation sets of $R_{m,n}=r_{m,n}~(1\leq m \leq M,~2\leq n \leq |\mathcal{G}_m|)$ and $\sum\limits_{n=1}^{|\mathcal{G}_{m}|} p_{m,n} = P_{m}~(1\leq m \leq M)$, we can obtain a sub-optimal intra-GPA for each group of users, which is shown in \eqref{opt_power} on the top of the next page, where $\eta_{m,n}=2^{r_{m,n}}-1$. Note that although the inter-group interference is neglected in Proposition 1, it is included when solving the equation sets. Thus, the minimal rate constraints for the users (from the 2nd one to the last one in each group) are always satisfied. The impact of the approximation on the inter-group interference will be evaluated in the simulation.

\begin{figure*}
\begin{equation}\label{opt_power}
\left \{
\begin{aligned}
&p_{m,|\mathcal{G}_{m}|}^{\circ}=\frac{\eta_{m,|\mathcal{G}_{m}|}}{\eta_{m,|\mathcal{G}_{m}|}+1}(P_{m}+\frac{\sum \limits_{i\neq m}|\mathbf{h}^{\mathrm{H}}_{m,|\mathcal{G}_{m}|}\mathbf{w}_{i}|^{2}P_{i}+\sigma^2}{|\mathbf{h}^{\mathrm{H}}_{m,|\mathcal{G}_{m}|}\mathbf{w}_{m}|^{2}}),\\
&p_{m,|\mathcal{G}_{m}|-1}^{\circ}=\frac{\eta_{m,|\mathcal{G}_{m}|-1}}{\eta_{m,|\mathcal{G}_{m}|-1}+1}(P_{m}-p_{m,|\mathcal{G}_{m}|}^{\circ}+\frac{\sum \limits_{i\neq m}|\mathbf{h}^{\mathrm{H}}_{m,|\mathcal{G}_{m}|-1}\mathbf{w}_{i}|^{2}P_{i}+\sigma^2}{|\mathbf{h}^{\mathrm{H}}_{m,|\mathcal{G}_{m}|-1}\mathbf{w}_{m}|^{2}}),\\
&~~~~\vdots\\
&p_{m,2}^{\circ}=\frac{\eta_{m,2}}{\eta_{m,2}+1}(P_{m}-\sum\limits_{k=3}^{|\mathcal{G}_{m}|}p_{m,k}^{\circ}+\frac{\sum \limits_{i\neq m}|\mathbf{h}^{\mathrm{H}}_{m,2}\mathbf{w}_{i}|^{2}P_{i}+\sigma^2}{|\mathbf{h}^{\mathrm{H}}_{m,2}\mathbf{w}_{m}|^{2}}),\\
&p_{m,1}^{\circ}=P_{m}-\sum\limits_{k=2}^{|\mathcal{G}_{m}|}p_{m,k}^{\circ},
\end{aligned}
\right.
\end{equation}
\hrulefill
\end{figure*}

Under Proposition 1, the ASR in Problem \eqref{eq_problem3} can be simplified as
\begin{equation}\label{eq_ASR2}
R_{\rm{sum}}=\sum\limits_{m=1}^{M}R_{m,1}+\sum\limits_{m=1}^{M} \sum\limits_{n=2}^{|\mathcal{G}_m|}r_{m,n}.
\end{equation}

Substituting \eqref{opt_power} into Problem \eqref{eq_problem3}, Problem \eqref{eq_problem3} can be transformed to
\begin{equation}\label{eq_problem5}
\begin{aligned}
\mathop{\mathrm{Max}}\limits_{\{P_{m}\}}~~~~ &\sum\limits_{m=1}^{M}R_{m,1}\\
\mathrm{s.t.~~~}~~~ &C_1~:~R_{m,1} \geq r_{m,1},~~\forall m, \\
&C_2~:~\sum\limits_{m=1}^{M} P_{m} \leq P,
\end{aligned}
\end{equation}
which is an inter-GPA problem.

\subsection{The Inter-GPA Problem}
Due to the inter-group interference in the expression of the objective function, it is still challenging to solve Problem \eqref{eq_problem5}. We propose an iterative algorithm here. First, we initialize the group power $P_{m}$ equally. Then, we start iteration. In each iteration, the inter-group interference is assumed to be invariable, and we update the inter-GPA by maximizing the ASR in Problem \eqref{eq_problem5}, where the inter-group interference is defined as
\begin{equation}
\begin{aligned}
I^{\mathrm{(inter)}}_{m,n}\triangleq \sum \limits_{i\neq m}\sum \limits_{k=1}^{|\mathcal{G}_{i}|}|\mathbf{h}^{\mathrm{H}}_{m,n}\mathbf{w}_{i}|^{2}p_{i,k}=\sum \limits_{i\neq m}|\mathbf{h}^{\mathrm{H}}_{m,n}\mathbf{w}_{i}|^{2}P_{i}.
\end{aligned}
\end{equation}

Thus, the SINR for the first user in each group is linear to its signal power, i.e.,
\begin{equation}
\gamma_{m,1}=\frac{|\mathbf{h}^{\mathrm{H}}_{m,1}\mathbf{w}_{m}|^{2}p_{m,1}^{\circ}}{I^{\mathrm{(inter)}}_{m,1}+\sigma^2}
\end{equation}
where $p_{m,1}^{\circ}$ is defined in \eqref{opt_power}. Furthermore, according to the expression in \eqref{opt_power}, if the inter-group interference is invariable,  $p_{m,1}^{\circ}$ is also linear to $P_{m}$. Thus, we can obtain the relationship between $\gamma_{m,1}$ and $P_{m}$ as
\begin{equation}\label{SINR_linear2}
\gamma_{m,1}=k_{m}P_{m}+b_{m},
\end{equation}
where $k_{m}$ and $b_{m}$ are given by
\begin{align}
&k_{m}=\frac{|\mathbf{h}^{\mathrm{H}}_{m,1}\mathbf{w}_{m}|^{2}}{I^{\mathrm{(inter)}}_{m,1}+\sigma^2}\Bigg{(}1-\sum \limits_{n=2}^{|\mathcal{G}_{m}|} \Big{[} \eta_{m,n}\prod \limits_{j=2}^{n}\frac{1}{(\eta_{m,j}+1)}\Big{]} \Bigg{)},\\ \nonumber
&b_{m}=-\frac{|\mathbf{h}^{\mathrm{H}}_{m,1}\mathbf{w}_{m}|^{2}}{I^{\mathrm{(inter)}}_{m,1}+\sigma^2} \times \sum \limits_{n=2}^{|\mathcal{G}_{m}|} \Big{[} \eta_{m,n}\frac{I^{\mathrm{(inter)}}_{m,n}+\sigma^2}{|\mathbf{h}^{\mathrm{H}}_{m,n}\mathbf{w}_{n}|^{2}}\prod \limits_{j=2}^{n}\frac{1}{(\eta_{m,j}+1)}\Big{]}.
\end{align}

It is easy to verify that $k_{m}>0$ and $b_{m}<0$. Then, the objective function in Problem \eqref{eq_problem5} is equal to
\begin{equation}
\begin{aligned}
\sum\limits_{m=1}^{M}R_{m,1}&=\sum\limits_{m=1}^{M}\log_{2}(1+\gamma_{m,1}) =\sum\limits_{m=1}^{M}\log_{2}(k_{m}P_{m}+b_{m}+1)\triangleq f(\{P_{m}\}).
\end{aligned}
\end{equation}

Constraint $C_1$ in Problem \eqref{eq_problem5} is equivalent to
\begin{equation}
\begin{aligned}
&R_{m,1} \geq r_{m,1}
\Leftrightarrow \gamma_{m,1} \geq \eta_{m,1}
\Leftrightarrow P_{m} \geq \frac{\eta_{m,1}-b_{m}}{k_{m}}.
\end{aligned}
\end{equation}

As the objective function becomes concave now and the constraints are linear, Problem \eqref{eq_problem5} can be directly solved by using the convex optimization tools \cite{boyd2004convex}. In order to explore the essential principle of the inter-GPA for mmWave-NOMA, we propose a method with low computation complexity here. We begin from the case without constraint $C_1$ in Problem \eqref{eq_problem5} and give the following Lemma.
\begin{lemma}
If the inter-group interference is assumed to be invariant in Problem \eqref{eq_problem5}, without the constraint $C_1$, the globally optimal solution is
\begin{equation}\label{inter_power}
P_{m}^{\star}=\frac{P+\sum \limits_{i=1}^{M}\frac{b_{i}+1}{k_{i}}}{M}-\frac{b_{m}+1}{k_{m}}, ~1\leq m \leq M.
\end{equation}
\end{lemma}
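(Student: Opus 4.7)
The plan is to recognize this as a classical concave maximization with a single linear sum constraint and to apply Lagrangian/KKT analysis, which is essentially a water-filling derivation. Observe that the objective $f(\{P_m\}) = \sum_{m=1}^{M} \log_2(k_m P_m + b_m + 1)$ is a sum of strictly concave functions in each $P_m$ (since $k_m > 0$ and the log is concave), so $f$ is strictly concave on its domain. With the inter-group interference frozen, $k_m$ and $b_m$ are constants, so the only active constraint is $\sum_{m=1}^M P_m \le P$.

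First I would argue that the sum constraint must be tight at the optimum: each partial derivative $\partial f / \partial P_m = k_m / [(\ln 2)(k_m P_m + b_m + 1)]$ is strictly positive wherever $f$ is defined, so increasing any $P_m$ strictly increases the objective. Hence any candidate maximizer satisfies $\sum_{m=1}^{M} P_m^\star = P$.

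Next I would form the Lagrangian
\begin{equation}
\mathcal{L}(\{P_m\},\mu)=\sum_{m=1}^{M}\log_{2}(k_{m}P_{m}+b_{m}+1)-\mu\!\left(\sum_{m=1}^{M}P_{m}-P\right),
\end{equation}
and set $\partial \mathcal{L}/\partial P_m = 0$, which yields $k_m P_m + b_m + 1 = 1/(\mu \ln 2)$, independent of $m$. Solving for $P_m$ gives $P_m = 1/(\mu \ln 2) - (b_m+1)/k_m$. Substituting this into the active power budget $\sum_m P_m = P$ determines the multiplier via $M/(\mu\ln 2) = P + \sum_{i=1}^M (b_i+1)/k_i$, and back-substituting produces the claimed closed form
\begin{equation}
P_{m}^{\star}=\frac{P+\sum\limits_{i=1}^{M}\frac{b_{i}+1}{k_{i}}}{M}-\frac{b_{m}+1}{k_{m}}.
\end{equation}
Since the problem is a concave maximization over a convex (affine) constraint set and Slater's condition is trivially satisfied, the KKT stationary point is the unique global maximizer, completing the proof.

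I do not expect any deep obstacle here; the only mildly delicate point is that Lemma~1 drops constraint $C_1$, so I do not need to worry about enforcing $R_{m,1}\ge r_{m,1}$, and the formula is allowed to return values that might violate the minimum-rate constraint (this is precisely why the subsequent development will need to patch the solution to re-impose $C_1$). The derivation otherwise reduces to a direct first-order KKT calculation for a standard water-filling problem with per-stream offsets $(b_m+1)/k_m$.
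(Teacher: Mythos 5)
Your proof is correct and follows essentially the same route as the paper's Appendix A: observe that $f$ is increasing in each $P_m$ so the power budget is tight, then apply the Lagrange multiplier/KKT stationarity condition $\partial f/\partial P_m=\lambda$ together with $\sum_m P_m=P$ and solve for the closed form. Your version is simply more explicit about the concavity justification and the elimination of the multiplier, but there is no substantive difference.
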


\begin{proof}
See Appendix A.
\end{proof}

According to Lemma 1, if $P_{m}^{\star}$ in \eqref{inter_power} is located in the feasible domain of the constraint $C_1$ in Problem \eqref{eq_problem5}, i.e., $P_{m}^{\star}\geq \frac{\eta_{m,1}-b_{m}}{k_{m}}$ for all $1\leq m\leq M$, $P_{m}^{\star}$ is the optimal solution of Problem \eqref{eq_problem5}. However, if $P_{m}^{\star}$ in \eqref{inter_power} is not located in the feasible domain of the constraint $C_1$ in Problem \eqref{eq_problem5}, i.e. $P_{m}^{\star}>\frac{\eta_{m,1}-b_{m}}{k_{m}}$ for any one of $1\leq m\leq M$, $P_{m}^{\star}$ is not the optimal solution of Problem \eqref{eq_problem5}. We may find the optimal solution by using the following Lemma.

\begin{lemma}
If the inter-group interference is assumed to be invariant in Problem \eqref{eq_problem5}, with the constraint $C_1$, the globally optimal solution should always satisfy
\begin{equation}\label{inter_power2}
P_{m}^{\circ}=\frac{\eta_{m,1}-b_{m}}{k_{m}}, ~\forall m \in \mathcal{U},
\end{equation}
where $\mathcal{U}=\{i|1\leq i\leq M,~P_{i}^{\star}<\frac{\eta_{i,1}-b_{i}}{k_{i}}\}$ and $P_{i}^{\star}$ is defined in \eqref{inter_power}.
\end{lemma}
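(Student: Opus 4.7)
The plan is to reformulate Problem~\eqref{eq_problem5} as a convex water-filling problem and then rule out the claimed violation via KKT. Under the substitution $y_m \triangleq k_m P_m + b_m + 1$ and $P' \triangleq P + \sum_i (b_i+1)/k_i$, the objective becomes $\sum_m \log_2 y_m$, $C_1$ becomes $y_m \geq \alpha_m$ with $\alpha_m \triangleq \eta_{m,1}+1$, and $C_2$ becomes $\sum_m y_m/k_m \leq P'$. Since the objective is strictly concave and the feasible set is polyhedral, KKT is both necessary and sufficient. Lemma~1, rewritten in the same variables, becomes the flat unconstrained level $y_m^\star = k_m P'/M$, equivalently $1/(\nu^\star \ln 2) = P'/M$, where $\nu^\star$ is the multiplier of the total-power budget in the unconstrained case.

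First I would write the KKT stationarity condition $1/(y_m^\circ \ln 2) + \mu_m = \nu^\circ/k_m$ together with $\mu_m, \nu^\circ \geq 0$ and the standard complementary slackness. Since $1/(y_m^\circ \ln 2)$ is strictly positive, $\nu^\circ > 0$, so the total budget is active. Partition the indices according to whether the minimum-rate constraint is active, and let $\mathcal{V} = \{m : y_m^\circ > \alpha_m\}$: for $m \in \mathcal{V}$ one has $\mu_m = 0$ and $y_m^\circ = k_m/(\nu^\circ \ln 2)$; for $m \in \mathcal{V}^c$ one has $y_m^\circ = \alpha_m$ and, from $\mu_m \geq 0$, the inequality $\alpha_m/k_m \geq 1/(\nu^\circ \ln 2)$.

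Next I would argue by contradiction. Suppose some $m^\star \in \mathcal{U}$ lies in $\mathcal{V}$. Being in $\mathcal{U}$ means $\alpha_{m^\star}/k_{m^\star} > 1/(\nu^\star \ln 2) = P'/M$; inactivity means $\alpha_{m^\star}/k_{m^\star} < y_{m^\star}^\circ/k_{m^\star} = 1/(\nu^\circ \ln 2)$. Chaining yields $\nu^\circ < \nu^\star$, so the constrained water level strictly exceeds $P'/M$. Applied to any $m \in \mathcal{V}^c$, the KKT inequality from the previous paragraph then gives $\alpha_m/k_m \geq 1/(\nu^\circ \ln 2) > P'/M$, hence $y_m^\star = k_m P'/M < \alpha_m$, which in the original variables reads $P_m^\star < (\eta_{m,1}-b_m)/k_m$ and proves the inclusion $\mathcal{V}^c \subseteq \mathcal{U}$.

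To close the loop I would invoke the active budget $\sum_{m \in \mathcal{V}} y_m^\circ/k_m + \sum_{m \in \mathcal{V}^c} \alpha_m/k_m = P'$, which simplifies to $|\mathcal{V}|/(\nu^\circ \ln 2) = P' - \sum_{m \in \mathcal{V}^c} \alpha_m/k_m$. Because $\mathcal{V}^c \subseteq \mathcal{U}$ forces each term in the last sum to strictly exceed $P'/M$, the right-hand side is strictly less than $P' - |\mathcal{V}^c| P'/M = |\mathcal{V}|\, P'/M$, so $1/(\nu^\circ \ln 2) < P'/M$. This directly contradicts $1/(\nu^\circ \ln 2) > P'/M$ derived above; hence no $m^\star \in \mathcal{U}$ can lie in $\mathcal{V}$, which gives $P_m^\circ = (\eta_{m,1}-b_m)/k_m$ for every $m \in \mathcal{U}$. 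I expect the main obstacle to be the bookkeeping step that establishes $\mathcal{V}^c \subseteq \mathcal{U}$; it is the non-obvious link that turns KKT into a closed-form contradiction via the budget equation, after which the remaining estimates are routine.
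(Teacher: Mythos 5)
Your argument is correct, and it takes a genuinely different route from the paper. The paper proves Lemma~2 by a direct exchange (perturbation) argument: assuming some $m_1\in\mathcal{U}$ has its rate constraint inactive, it finds an $m_2$ with $P_{m_2}^{\circ}<P_{m_2}^{\star}$, transfers a small amount of power $\epsilon$ from $m_1$ to $m_2$, and uses the strict monotone decrease of $\partial f/\partial P_i$ in $P_i$ (together with the fact that all these partial derivatives are equalized at the unconstrained optimum $\{P_m^{\star}\}$ of \eqref{inter_power}) to show the objective strictly increases, contradicting optimality. Your proof instead reformulates \eqref{eq_problem5} as a water-filling problem in the variables $y_m=k_mP_m+b_m+1$, invokes the full KKT characterization (which is legitimate here: the constraints are affine, so KKT is necessary, and concavity makes it sufficient), establishes the inclusion $\mathcal{V}^{c}\subseteq\mathcal{U}$, and closes with the budget identity. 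The bookkeeping all checks out: $y_m^{\star}=k_mP'/M$ is indeed the unconstrained solution, membership in $\mathcal{U}$ is indeed equivalent to $\alpha_m/k_m>P'/M$, and the final chain $1/(\nu^{\circ}\ln 2)<P'/M$ versus $1/(\nu^{\circ}\ln 2)>P'/M$ is a genuine contradiction (note $|\mathcal{V}|\geq 1$ is guaranteed by the assumed $m^{\star}\in\mathcal{V}$, so the division is safe). What the paper's approach buys is brevity and elementarity --- it needs only the monotonicity of the marginal utilities and no multipliers; what your approach buys is more structural information, since along the way you effectively characterize the entire active set and the constrained water level, which makes it transparent why the fixed-point iteration between Lemma~1 and Lemma~2 in Algorithm~2 terminates. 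Either proof is acceptable; yours is heavier machinery for the same conclusion.
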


\begin{proof}
See Appendix B.
\end{proof}

Lemma 2 provides the globally optimal power allocation for $m \in \mathcal{U}$. For $m \notin \mathcal{U}$, the optimal power allocation can be obtained by solving the following problem.
\begin{equation}\label{eq_problem5.2}
\begin{aligned}
\mathop{\mathrm{Max}}\limits_{\{P_{m}\}}~~~~ &\sum\limits_{m \notin \mathcal{U}} R_{m,1}\\
\mathrm{s.t.~~~}~~~ &C_1~:~R_{m,1} \geq r_{m,1},~~m \notin \mathcal{U}, \\
&C_2~:~\sum\limits_{m \notin \mathcal{U}} P_{m} \leq P-\sum\limits_{j \in \mathcal{U}} P_{j}^{\circ},
\end{aligned}
\end{equation}
which has a similar formulation with Problem \eqref{eq_problem5}. Thus, Lemma 1 and Lemma 2 can also be used to solve Problem \eqref{eq_problem5.2}, which forms a closed loop. In summary, we give Algorithm \ref{alg_powerallo} to accomplish the inter-GPA.

\begin{algorithm}[h]
\caption{Inter-GPA}
\label{alg_powerallo}
\begin{algorithmic}[1]
\REQUIRE ~$K$, $M$, $\{\mathcal{G}_m\}$, $P$, $\{\mathbf{h}_{k}\}$, $\{r_{k}\}$, $\mathbf{W}$, and $F_{\mathrm{max}}$.\\
\ENSURE ~Inter-GPA: $\{P_{m}^{\circ}\}$.\\
\STATE $P_{m}^{\circ(0)}=\frac{P}{M} ~(1\leq m \leq M)$.
\FOR {$t=1:F_{\mathrm{max}}$}
\STATE $\mathcal{M}=\{1,2,\cdots,M\}$.
\STATE $\mathcal{U}=\mathcal{M}$.
\WHILE {$\mathcal{U}\neq \Phi$}
\STATE Obtain $k_{m},~b_{m} ~(\forall m \in \mathcal{M})$ in \eqref{SINR_linear2}.
\STATE Obtain $P_{m}^{\star} ~(\forall m \in \mathcal{M})$ according to \eqref{inter_power}.
\STATE $\mathcal{U}=\{i|i \in \mathcal{M},~P_{i}^{\star}<\frac{\eta_{i,1}-b_{i}}{k_{i}}\}$.
\STATE $P_{m}^{\circ(t)}=\frac{\eta_{m,1}-b_{m}}{k_{m}} ~(\forall m \in \mathcal{U})$.
\STATE $\mathcal{M}=\mathcal{M}/\mathcal{U}$.
\ENDWHILE
\STATE $P_{m}^{\circ(t)}=P_{m}^{\star} ~(\forall m \in \mathcal{M})$.
\ENDFOR
\STATE $P_{m}^{\circ}=P_{m}^{\circ(T_{\mathrm{max}})} ~(1 \leq m \leq M)$.
\RETURN $\{P_{m}^{\circ}\}$.
\end{algorithmic}
\end{algorithm}

Hereto, the power allocation is solved. Given an arbitrary fixed HBF, we can obtain the inter-GPA using Algorithm \ref{alg_powerallo} and obtain the intra-GPA according to \eqref{opt_power}. Since the proposed intra-GPA and inter-GPA solutions are both sub-optimal, we provide the following theorem to evaluate the optimality of the proposed power allocation solution.
\begin{theorem}
If the inter-group interference in Problem \eqref{eq_problem3} is small and approaches to zero, the proposed solution of power allocation in Algorithm \ref{alg_powerallo} and \eqref{opt_power} is globally optimal.
\end{theorem}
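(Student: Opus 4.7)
The plan is to decompose the proof into two layers corresponding to the two-stage structure of the algorithm: first show that the intra-GPA rule in \eqref{opt_power} is globally optimal for each group once the group-power budget $P_m$ is fixed, and then show that the inter-GPA produced by Algorithm \ref{alg_powerallo} is globally optimal for the resulting reduced problem. Combining the two global optimalities then yields optimality of the joint solution for Problem \eqref{eq_problem3}, because the introduction of the auxiliary variables $\{P_m\}$ in \eqref{eq_problem3} is lossless (as noted in the paper).

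First, I would set the inter-group interference $I^{\mathrm{(inter)}}_{m,n}\to 0$ in Problem \eqref{eq_problem3}. With this, the coupling across groups in the objective and in constraint $C_1$ disappears, so for any fixed partition $(P_1,\ldots,P_M)$ satisfying $\sum_m P_m\le P$, the problem separates into $M$ independent intra-group problems of the form \eqref{eq_problem4}. Proposition 1 (whose underlying optimization was solved in \cite{Zhu2018NOMAPSO}) shows that the unique optimizer of \eqref{eq_problem4} satisfies $R_{m,n}=r_{m,n}$ for $n\ge 2$, and solving those equalities together with $\sum_n p_{m,n}=P_m$ produces exactly \eqref{opt_power}. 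Hence, given $\{P_m\}$, the intra-GPA delivered by \eqref{opt_power} is globally optimal.

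Next, I would substitute \eqref{opt_power} back into the objective to reduce the problem to the inter-GPA problem \eqref{eq_problem5}. Crucially, with zero inter-group interference, the coefficients $k_m$ and $b_m$ in \eqref{SINR_linear2} depend only on the fixed quantities $\sigma^2$, $\{|\mathbf{h}^{\mathrm{H}}_{m,n}\mathbf{w}_m|^2\}$ and $\{\eta_{m,n}\}$; they are \emph{not} functions of the unknown $\{P_i\}$. So the invariance assumption used inside Algorithm \ref{alg_powerallo} becomes exact rather than approximate, the algorithm's outer (index $t$) loop stabilizes after a single pass, and the inner while-loop on $\mathcal{U}$ implements a finite sequence of KKT refinements. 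Lemma 1 identifies the unique unconstrained maximizer of the (now strictly concave) objective $\sum_m\log_2(k_mP_m+b_m+1)$ on the simplex $\sum_m P_m\le P$; Lemma 2 shows that any index violating its rate floor must be tight at optimum, and fixing those indices reduces the problem to one of the same form on the remaining indices (Problem \eqref{eq_problem5.2}). Iterating this argument terminates in at most $M$ steps and produces the globally optimal solution of \eqref{eq_problem5}.

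Finally, since the outer maximization over $\{P_m\}$ and the inner maximization over $\{p_{m,n}\}$ in \eqref{eq_problem3} are both solved to global optimality, the joint solution is globally optimal for \eqref{eq_problem3}, and therefore for the original power allocation problem \eqref{eq_problem2}. The main subtlety — and the only place a careful argument is required — is the step showing that Algorithm \ref{alg_powerallo} genuinely returns the global optimum of \eqref{eq_problem5} in the zero-interference limit; this reduces to verifying that the ``freeze the violating coordinates at the boundary and re-solve'' procedure is consistent with the KKT conditions of the concave program, which follows from Lemmas 1 and 2 combined with the fact that the objective is separable and strictly concave in each $P_m$ once $k_m,b_m$ are constants.
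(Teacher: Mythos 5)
Your proposal is correct and follows essentially the same route as the paper's proof: zero inter-group interference decouples the intra-GPA problems (so \eqref{opt_power} is exactly optimal via Proposition 1), and the reduced inter-GPA problem becomes a genuinely concave program with constant coefficients $k_m,b_m$, which Algorithm \ref{alg_powerallo} solves globally in a single outer iteration via Lemmas 1 and 2. Your version merely spells out the details the paper leaves implicit (the exactness of the invariance assumption and the KKT consistency of the boundary-freezing step), which is a useful elaboration but not a different argument.
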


\begin{proof}
If the inter-group interference is zero, the intra-GPA problems are independent for different groups. According to the conclusion in [21, Theorem 1], \eqref{opt_power} is the optimal intra-GPA solution with the given fixed inter-GPA. Substituting \eqref{opt_power} into Problem \eqref{eq_problem3}, the inter-GPA problem is concave and can be solved by using Algorithm \ref{alg_powerallo} with only one iteration. Due to the concavity, the inter-GPA solution is also optimal. Thus, the globally optimal power allocation can be obtained by using the proposed scheme if the inter-group interference is zero.
\end{proof}

Based on Theorem 1, we can find that the optimality of the power allocation solution depends on the inter-group interference, which can be restrained through the elaborate beamforming design. Thus, the design of HBF should take both decreasing the interference and increasing the ASR into account. The details will be shown in the next section.

\section{Solution of Hybrid Beamforming}
In this Section, we provide the solution of HBF in Problem \eqref{eq_problem}. As we have analyzed previously, the design of HBF should guarantee the suppression of the inter-group interference, as well as the improvement of the ASR. For mmWave-NOMA, there may exist more than one users in each group. The traditional unidirectional beamforming cannot support all the users. Thus, a multi-directional beamforming scheme is required in the analog domain. However, the non-convex modulus constraint for ABF makes the beamforming problem challenging. Besides, as shown in \eqref{eq_SINR}, due to the superposition of the inter-group interference and the intra-group interference, it is difficult to obtain the optimal HBF solution. To this end, we propose a sub-optimal approach. First, the DBF is designed using the AZF method to reduce the inter-group interference, where the ABF matrix is arbitrary and fixed. Then, we use the BC-PSO algorithm in \cite{Zhu2018NOMAPSO} to solve the ABF problem, where the power allocation and DBF matrix are substituted as the function of the ABF matrix.

\subsection{DBF with Arbitrary Fixed ABF}
As each group of users have a unique DBF vector, we may design the DBF with the AZF method to reduce the inter-group interference, where the ABF is arbitrary and fixed. Since the rank of the DBF matrix is no more than the number of the users, i.e., $M\leq K$, the inter-group interference cannot be completely suppressed through DBF. Recalling that when optimizing the power allocation, the rate gains are acquired at the first user in each group. Thus, we select the channel response vector of the user with the highest channel gain in each group as the equivalent channel vector. Note that the channel gain utilized here corresponds to the power of the channel response vector before beamforming, which differs from the effective channel gain after beamforming. Then, the $N\times M$ equivalent channel matrix is
\begin{equation}
\mathbf{\tilde{H}}=[\mathbf{h}_{1,1},\mathbf{h}_{2,1},\cdots,\mathbf{h}_{M,1}].
\end{equation}

Consequently, the DBF matrix can be generated by the AZF method as \footnote{Since the DBF design implements an approximate zero-forcing method, i.e., only to the first user in each group, the inclusion of inter-group interference in the previous section is relevant.}
\begin{equation}\label{DBF}
\mathbf{\tilde{D}}=(\mathbf{\tilde{H}}^{\mathrm{H}}\mathbf{A})^{\dag}.
\end{equation}

Due to the unit power constraint for the HBF matrix, each column of the DBF matrix should be normalized as
\begin{equation}\label{DBF_norm}
[\mathbf{D}^{\circ}]_{:,m}=\frac{[\mathbf{\tilde{D}}]_{:,m}}{\|\mathbf{A}[\mathbf{\tilde{D}}]_{:,m}\|}.
\end{equation}

Although the inter-group interference cannot be completely eliminated with DBF, it can be further suppressed with ABF, which has a higher degree of freedom.

\subsection{ABF Using BC-PSO Alogrithm}
Given an arbitrary fixed ABF matrix, we can obtain the DBF matrix according to \eqref{DBF} and \eqref{DBF_norm}. Then, the inter-GPA can be obtained by Algorithm \ref{alg_powerallo}, and meanwhile the intra-GPA is given by \eqref{opt_power}. It is hard to optimize ABF with the conventional approaches, since the closed-form expression of $R_{\mathrm{sum}}$ over $\mathbf{A}$ is complicated. In addition, the ABF matrix $\mathbf{A}$ with CM constraint is high-dimensional, i.e., $N \times M$, which makes the ABF design difficult.

To solve this difficult problem, particle swarm optimization (PSO) is a good approach \cite{fukuyama2008fundamentals}. In the $N\times M$-dimensional search space $\mathcal{S}$, the $I$ particles in the swarm are randomly initialized with position $\mathbf{A}$ and velocity $\mathbf{V}$. Each particle has a memory for its best found position $\mathbf{P}_{\text{best}}$ and the globally best position $\mathbf{G}_{\text{best}}$, where the goodness of a position is evaluated by the fitness function. For each iteration, the velocity and position of each particle are updated based on
\begin{equation}\label{eq_PSOregular}
\begin{aligned}
&[\mathbf{V}]_{i,j}=\omega[\mathbf{V}]_{i,j}+c_{1}\text{rand()}*([\mathbf{P}_{\text{best}}]_{i,j}-[\mathbf{A}]_{i,j})+c_{2}\text{rand()}*([\mathbf{G}_{\text{best}}]_{i,j}-[\mathbf{A}]_{i,j})\\
&[\mathbf{A}]_{i,j}=[\mathbf{A}]_{i,j}+[\mathbf{V}]_{i,j}
\end{aligned}
\end{equation}
for $i=1,2,\cdots,N;~j=1,2,\cdots,M$. The parameter $\omega$ is the inertia weight of velocity. In general, $\omega$ is decreasing linearly from the maxima to the minima for each time of iteration to improve the convergence speed. The parameters $c_{1}$ and $c_{2}$ are the cognitive ratio and social ratio, respectively. The random number function rand() returns a number between 0.0 and 1.0 with uniform distribution.

Due to the CM constraint, the search space for $\mathbf{A}$, i.e., $\{\mathbf{A}\big{|}|[\mathbf{A}]_{i,j}| = \frac{1}{\sqrt{N}}\}$, is highly non-convex. It has been shown that the BC-PSO algorithm outperforms the classic PSO algorithm in the ABF problem \cite{Zhu2018NOMAPSO}. The key idea of the BC-PSO algorithm is to relax the search space as a convex set, i.e., $\mathcal{S}=\{\mathbf{A}\big{|}|[\mathbf{A}]_{i,j}| \leq \frac{1}{\sqrt{N}}\}$, and adjust the particles onto the boundaries for each iteration to satisfy the CM constraint. The outer boundary is defined as $\{\mathbf{A}\big{|}|[\mathbf{A}]_{i,j}| = d_{\mathrm{out}}\}$, where $d_{\mathrm{out}}=\frac{1}{\sqrt{N}}$ is fixed. The inter boundary is defined as $\{\mathbf{A}\big{|}|[\mathbf{A}]_{i,j}| = d_{\mathrm{in}}\}$, where $d_{\mathrm{in}}=\frac{t}{T_{\mathrm{max}}}\frac{1}{\sqrt{N}}$ is dynamic. $T_{\mathrm{max}}$ is the maximum number of iterations and $t=1,2,\cdots,T_{\mathrm{max}}$. For each iteration, the particles out of the boundaries are adjusted onto the boundaries. Then, after calculating the fitness function for each particle, the locally and globally best positions, i.e., $\mathbf{P}_{\text{best}}$ and $\mathbf{G}_{\text{best}}$, are updated. With this implementation, the particles can move throughout the relaxed search space and converge to satisfy the CM constraint eventually. Compared with the classic PSO algorithm, the BC-PSO algorithm has enhanced search capabilities.

\section{Summary of the Complete Solution and Computational Complexity}
\subsection{Summary of The Complete Solution}
In the above sections, we have presented the algorithms and formulas, respectively, for user grouping, power allocation, digital beamforming and analog beamforming. Based on these algorithms and formulas, we give the complete solution to realize an arbitrary mmWave-NOMA system. As shown in in Algorithm \ref{alg_PSO}, we firstly use Algorithm \ref{alg_grouping} to divide the users into $M$ groups, and obtain $\{\mathcal{G}_m\}$. Then, we use the BC-PSO algorithm to iteratively optimize the position of the particle, i.e., the ABF matrix, where the fitness function is defined as the ASR in \eqref{eq_ASR}. Note that in the part of power allocation and DBF, we assume that the ABF matrix is arbitrary and fixed. Thus, the power allocation and DBF can be substituted as the function of the analog beamforming matrix in Algorithm \ref{alg_PSO}. Given different ABF matrixes, we should calculate the power allocation and DBF matrixes first, and then obtain the ASR. In each iteration, the computations of the DBF matrix $\mathbf{D}^{\circ}$ using \eqref{DBF} and \eqref{DBF_norm}, the inter-GPA $\{P_{m}^{\circ}\}$ using Algorithm \ref{alg_powerallo}, and the intra-GPA $\{p_{m,n}^{\circ}\}$ using \eqref{opt_power} are performed sequentially after determining the ABF matrix. Hence, after $T_{\mathrm{max}}$ iterations, the sub-optimal overall solution $\mathbf{A}^{\circ}$, $\mathbf{D}^{\circ}$ and $\{p_{m,n}^{\circ}\}$ are jointly obtained.

\begin{algorithm}[h]
\caption{Proposed solution for mmWave-NOMA}
\label{alg_PSO}
\begin{algorithmic}[1]
\REQUIRE ~$K$, $M$, $N$, $P$, $\{\mathbf{h}_{k}\}$, $\{r_{k}\}$, and parameters \\
~~~~~~      for BC-PSO $\{I$, $T_{\mathrm{max}}$, $c_1$, $c_2$, $\omega_{\text{max}}$, $\omega_{\text{min}}\}$.
\ENSURE ~$\{\mathcal{G}_m\}$, $\mathbf{A}^{\circ}$, $\mathbf{D}^{\circ}$ and $\{p_{m,n}^{\circ}\}$.\\
\STATE Obtain the user grouping $\{\mathcal{G}_m\}$ using Algorithm \ref{alg_grouping}.
\STATE Initialize the position $\mathbf{A}_{i}$ and velocity $\mathbf{V}_{i}$.
\STATE Find the globally best position $\mathbf{G}_{\text{best}}$.
\FOR {$t=1:T_{\mathrm{max}}$ }
\STATE $\omega=\omega_{\text{max}}-\frac{t}{T}(\omega_{\text{max}}-\omega_{\text{min}})$.
\STATE $d_{\mathrm{out}}=\frac{1}{\sqrt{N}},~d_{\mathrm{in}}=\frac{t}{T_{\mathrm{max}}}\frac{1}{\sqrt{N}}$.
\FOR {$l=1:I$ }
\FOR {$i=1:N$ }
\FOR {$j=1:M$ }
\STATE Update $[\mathbf{V}_{l}]_{i,j}$ and $[\mathbf{A}_{l}]_{i,j}$ based on \eqref{eq_PSOregular}.
\IF{$|[\mathbf{A}_{l}]_{i,j}|>d_{\mathrm{out}}$}
\STATE $[\mathbf{A}_{l}]_{i,j}=d_{\mathrm{out}}\frac{[\mathbf{A}_{l}]_{i,j}}{|[\mathbf{A}_{l}]_{i,j}|}$.
\ENDIF
\IF{$|[\mathbf{A}_{l}]_{i,j}|<d_{\mathrm{in}}$}
\STATE $[\mathbf{A}_{l}]_{i,j}=d_{\mathrm{in}}\frac{[\mathbf{A}_{l}]_{i,j}}{|[\mathbf{A}_{l}]_{i,j}|}$.
\ENDIF
\IF{$|[\mathbf{P}_{\text{best},l}]_{i,j}|<d_{\mathrm{in}}$}
\STATE $[\mathbf{P}_{\text{best},l}]_{i,j}=d_{\mathrm{in}}\frac{[\mathbf{P}_{\text{best},l}]_{i,j}}{|[\mathbf{P}_{\text{best},l}]_{i,j}|}$.
\ENDIF
\STATE Obtain the DBF matrix $\mathbf{D}^{\circ}$ according to \eqref{DBF} and \eqref{DBF_norm}.
\STATE Reorder the effective channel gains of the users in each group.
\STATE Obtain the inter-GPA $\{P_{m}^{\circ}\}$ using Algorithm \ref{alg_powerallo}.
\STATE Obtain the intra-GPA $\{p_{m,n}^{\circ}\}$ according to \eqref{opt_power}.
\STATE Obtain the fitness function $R_{\mathrm{sum}}$ according to \eqref{eq_ASR}.
\ENDFOR
\ENDFOR
\STATE Update $\mathbf{P}_{\text{best},l}$.
\ENDFOR
\STATE Update $\mathbf{G}_{\text{best}}$.
\ENDFOR
\STATE $\mathbf{A}^{\circ}=\mathbf{G}_{\text{best}}$.
\RETURN $\{\mathcal{G}_m\}$, $\mathbf{A}^{\circ}$, $\mathbf{D}^{\circ}$ and $\{p_{m,n}^{\circ}\}$.
\end{algorithmic}
\end{algorithm}

\subsection{Computational Complexity}
When operating the user grouping in Algorithm \ref{alg_grouping}, the complexities of calculating the channel correlation and the norm channel vector are $\mathcal{O}(K^{2}N)$ and $\mathcal{O}(KN)$, respectively. In each iteration, the complexities of updating the cluster representative and the user grouping are $\mathcal{O}(K^{2})$ and $\mathcal{O}(KM)$, respectively. Since the number of antennas is much larger than that of the RF chains, i.e., $N\gg M$, the maximal complexity of Algorithm \ref{alg_grouping} is $\mathcal{O}(K^{2}N)$. In Algorithm \ref{alg_powerallo}, the complexity of calculating the effective channel gains of the users is $\mathcal{O}(MKN)$. For each time of updating the inter-GPA, the maximal number of iterations to update the inter-GPA from Step 5 to 11 is $M$, and the complexity of computing the inter-GPA in each subcycle is no more than $\mathcal{O}(K^{2})$. Thus, the complexity of Algorithm \ref{alg_powerallo} is $\mathcal{O}(MKN+F_{\mathrm{max}}MK^{2})$. In Algorithm \ref{alg_PSO}, the numbers of invoking Algorithm \ref{alg_grouping} and Algorithm \ref{alg_powerallo} are 1 and $T_{\mathrm{max}}IMN$, respectively. Consequently, the total computational complexity of the proposed user pairing algorithm, HBF and power allocation algorithm is $\mathcal{O}(T_{\mathrm{max}}IM^{2}KN^{2}+T_{\mathrm{max}}F_{\mathrm{max}}IM^{2}K^{2}N)$, which is a polynomial complexity. In comparison, The total computational complexity of the algorithm in \cite{Dai2018MIMONOMA} is $\mathcal{O}(MK^2+MN+TK^{4.5}\log_2(1/\varepsilon))$, where $T$ is the maximum iteration times and $\varepsilon$ is the solution accuracy. Since the number of the antennas is much larger than those of the users and the RF chains, i.e., $N\gg K, N\gg M$, the computational complexity in \cite{Dai2018MIMONOMA} is lower compared with our algorithm, because the HBF is not jointly optimized with the power allocation.

\section{Simulation Results}
In this section, we provide some simulation results to verify the performance of the proposed mmWave-NOMA scheme. We adopt the channel model shown in \eqref{eq_oriChannel}, where the users are uniformly distributed from 10m to 100m away from the BS, and the channel gain of the node 30m away from the BS has an average power of 0 dB to noise power. The number of MPCs for all the users are $L=4$. Both LOS and NLOS channel models are considered. For the LOS channel, the average power of the NLOS paths is 15 dB weaker than that of the LOS path. For the NLOS channel, the coefficient of each path has an average power of $1/\sqrt{L}$. The cosine of the AoD for each path of the users is generated by a uniformly distributed random variable ranging from -1 to 1. Each point of the figures are the average performance of 100 channel realizations. The corresponding parameter settings are $I=800, F_{\mathrm{max}}=6, T_{\mathrm{max}}=200, c_{1}=c_{2}=1.4, \omega_{\text{max}}=0.9, \omega_{\text{min}}=0.4$.

In the simulations, we consider the following six typical mmWave communication schemes: ``mmWave-NOMA Proposed'' is corresponding to the proposed joint approach, including user grouping, power allocation, and HBF. ``mmWave-NOMA Ideal'' is based on the proposed joint approach and with assumption of none inter-group interference, i.e., $I^{\mathrm{(inter)}}_{m,n}=0$. Besides, ``mmWave-NOMA [13]'' and ``fully digital MIMO'' are corresponding to the approach for mmWave-NOMA with fully connected HBF structure in \cite{Dai2018MIMONOMA} and the mmWave-fully-digital-MIMO structure with zero-forcing precoding, respectively. For fair comparison, the power splitting part in \cite{Dai2018MIMONOMA} is neglected in the simulations, which means that all the power is used for wireless information transmission. ``TDMA-ZF'' corresponds to the performance of mmWave time division multiple access (TDMA) system, where $M$ out of $K$ users are served in each time slot. Each user is served by an independent analog beamformer with steering vector, and ZF and water-filling method is adopted for digital beamforming. While for ``mmWave-FDMA'', the users are assigned into $M$ groups, and the users in the same group perform frequency division multiple access (FDMA) \cite{Dai2018MIMONOMA}. Then, the achievable rate of the mmWave-FDMA scheme for the $k$th user is
\begin{equation}\label{eq_OMA}
R_{k}^{\mathrm{FDMA}}=\frac{1}{\left|\mathcal{G}^{k}\right|}\log_{2}\left(1+\frac{|\mathbf{h}^{\mathrm{H}}_{k}\mathbf{w}_{k}|^{2}p_{k}}{\sum \limits_{j\notin \mathcal{G}^{k}}|\mathbf{h}^{\mathrm{H}}_{k}\mathbf{w}_{j}|^{2}p_{j}+\frac{\sigma^{2}}{\left|\mathcal{G}^{k}\right|}}\right),
\end{equation}
where $\mathcal{G}^{k}$ represents the group which the $k$th user belongs to. The beamforming vector $\mathbf{w}_{k}$ and the power allocation $\{p_{k}\}$ are generated by using the approach in \cite{Dai2018MIMONOMA}.

In addition, we also evaluate the performance of the EE, which is defined as the ratio between the ASR and total power consumption, i.e.,
\begin{equation}\label{EE}
EE=\frac{R_{\mathrm{sum}}}{P+N_{\mathrm{RF}}P_{\mathrm{RF}}+N_{\mathrm{PS}}P_{\mathrm{PS}}},
\end{equation}
where $R_{\mathrm{sum}}$ is the ASR. $P$ is the transmission power. $P_{\mathrm{RF}}$ is the power consumption of each RF chain, and $N_{\mathrm{RF}}$ is the number of the RF chains, where $N_{\mathrm{RF}}=N$ for the fully digital structure and $N_{\mathrm{RF}}=M$ for the hybrid structure. $P_{\mathrm{PS}}$ is the power consumption of each PS, and $N_{\mathrm{RF}}$ is the number of the PSs, where $N_{\mathrm{PS}}=0$ for the fully digital structure and $N_{\mathrm{PS}}=MN$ for the hybrid structure. In the simulations, we select the typical parameter settings of $P=$1 W, $P_{\mathrm{RF}}=$250 mW, and $P_{\mathrm{PS}}=$1 mW\cite{Gao2016hyb}.

\begin{figure}
\begin{minipage}[t]{0.48\linewidth}
\centering
\includegraphics[width=\figwidth cm]{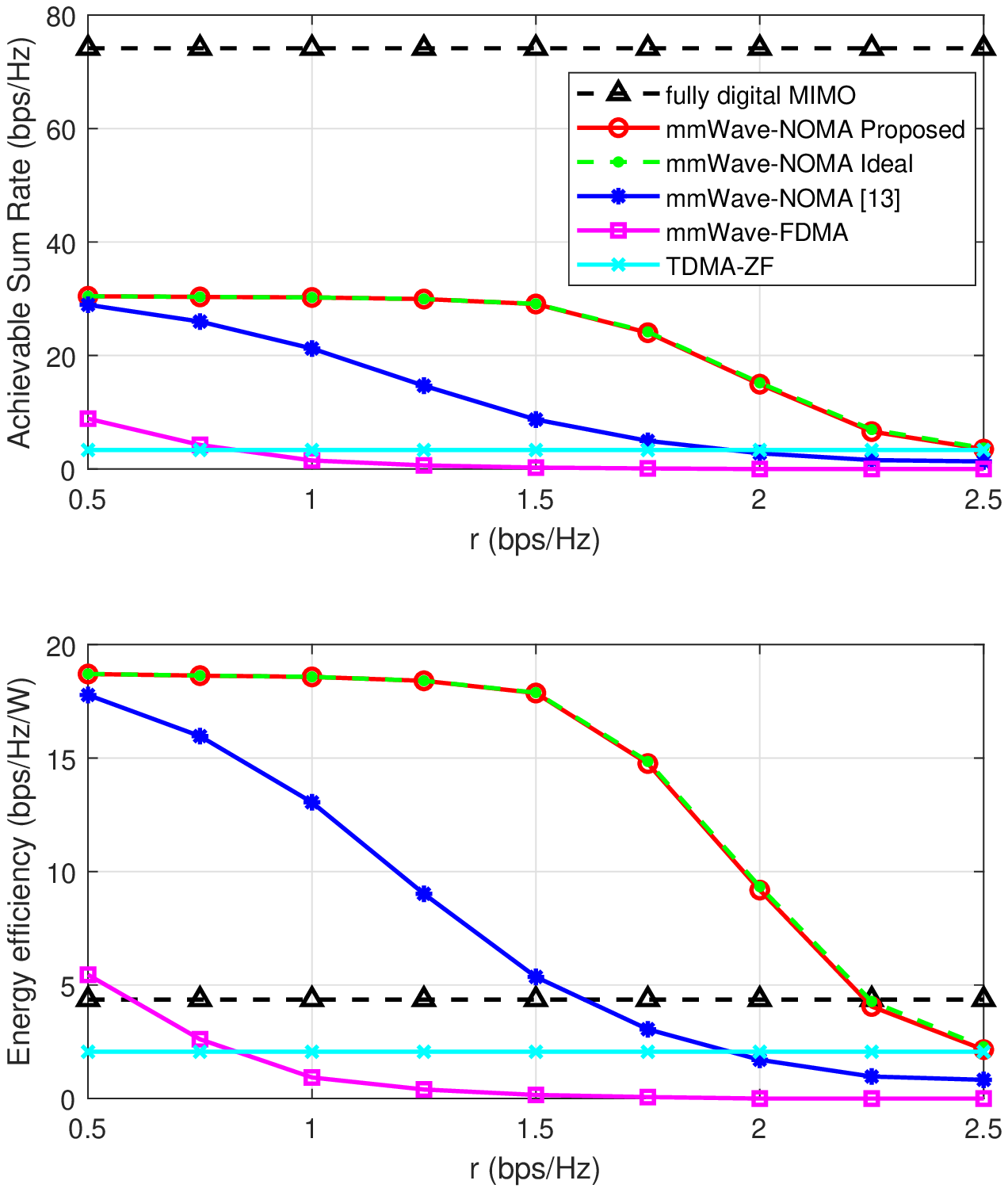}
  \caption{ASR/EE comparison between the mmWave-NOMA and mmWave-OMA systems with varying minimum rate constraint under the LOS channel model, where $N=64$, $M=2$, $K=6$, and $P/\sigma^2=30$ dB.}
  \label{Comp_r_LOS}
\end{minipage}
\begin{minipage}[t]{0.04\linewidth}
\end{minipage}
\begin{minipage}[t]{0.48\linewidth}
\centering
\includegraphics[width=\figwidth cm]{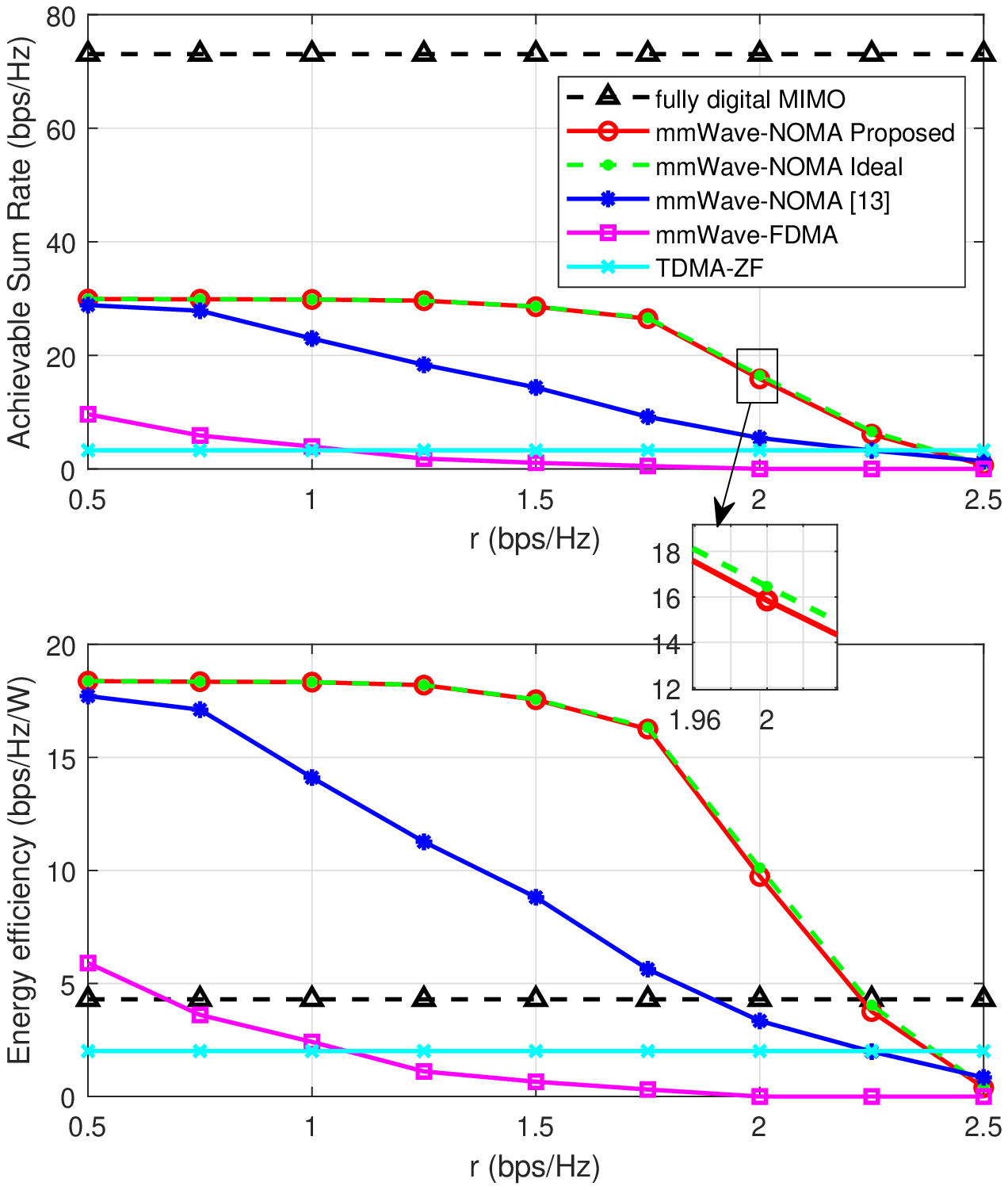}
  \caption{ASR/EE comparison between the mmWave-NOMA and mmWave-OMA systems with varying minimum rate constraint under the NLOS channel model, where $N=64$, $M=2$, $K=6$, and $P/\sigma^2=30$ dB.}
  \label{Comp_r_NLOS}
\end{minipage}
\end{figure}
Figs. \ref{Comp_r_LOS} and \ref{Comp_r_NLOS} show the ASR and EE comparisons between the proposed mmWave-NOMA approach, the mmWave-NOMA scheme in \cite{Dai2018MIMONOMA}, mmWave-OMA and fully digital MIMO with varying minimum rate constraint under the LOS channel and the NLOS channel models, respectively. The minimum rate constraints for all the users are equal to $r$. Clearly, the performance of the proposed mmWave-NOMA system is distinctly better than that of the mmWave-OMA system, TDMA, and the solution of mmWave-NOMA in \cite{Dai2018MIMONOMA}. Particularly, when the minimum rate constraint $r$ ranges from 1 to 2 bps/Hz, the ASR of the proposed approach is nearly 10 bps/Hz larger than that of the scheme in \cite{Dai2018MIMONOMA}. The reason is as follows. When $r$ is small, according to the NOMA principle, more beam gains and power can be allocated to the user with the highest channel gain in each group \cite{Zhu2018NOMAPSO}. Thus, the beamforming scheme in \cite{Dai2018MIMONOMA} is effective, where the beam in analog domain is steering to the first user in each group. When $r$ becomes larger, the users with worse channel conditions can only be served by the sidelobe of the beam in \cite{Dai2018MIMONOMA}. In contrast, the proposed solution in this paper can allocate more beam gains in analog domain to the users with worse channel conditions in each group. Thus, the proposed approach outperforms the scheme in \cite{Dai2018MIMONOMA}. However, when $r$ is large, there may exist some channel realizations in which the minimum rate constraint cannot be satisfied. In such a case, the ASR is set to be zero. This operation is also adopted in the scheme of \cite{Dai2018MIMONOMA}, which ensures the fairness of the comparison between the two methods. Therefore, the ASR tends to be zero for both of the two schemes, when $r$ is sufficiently large. Since the average ASR of the proposed scheme is larger than that of the scheme in \cite{Dai2018MIMONOMA}, it can be concluded that our method can find a better solution and achieve a higher feasibility. Besides, we can also find that the ASR of the proposed approach is close to the ideal case, which indicates that the inter-group interference is small by using the proposed user grouping and HBF schemes and has little influence on the ASR. This result also verifies that the approximation of neglecting the inter-group interference when optimizing the intra-GPA is reasonable. We have also provided an enlarged view of the ASR curve in Fig. \ref{Comp_r_NLOS}, it can be seen that there is a small gap between the ideal curve and the designed curve, which is caused by the inter-group interference. The performance gap is no more than 0.5 bps/Hz, which is very small compared with the total ASR. In the two figures, we can also find that, although the ASR of the fully digital MIMO structure is higher than that of both the mmWave-NOMA and mmWave-OMA, the EE of the fully digital MIMO structure is low compared with the HBF structure. Particularly, the EE of the proposed mmWave-NOMA scheme can achieve nearly fourfold EE compared with the fully digital MIMO structure when the minimal rate constraint is no more than 1.5 bps/Hz.

\begin{figure}
\begin{minipage}[t]{0.48\linewidth}
\centering
\includegraphics[width=\figwidth cm]{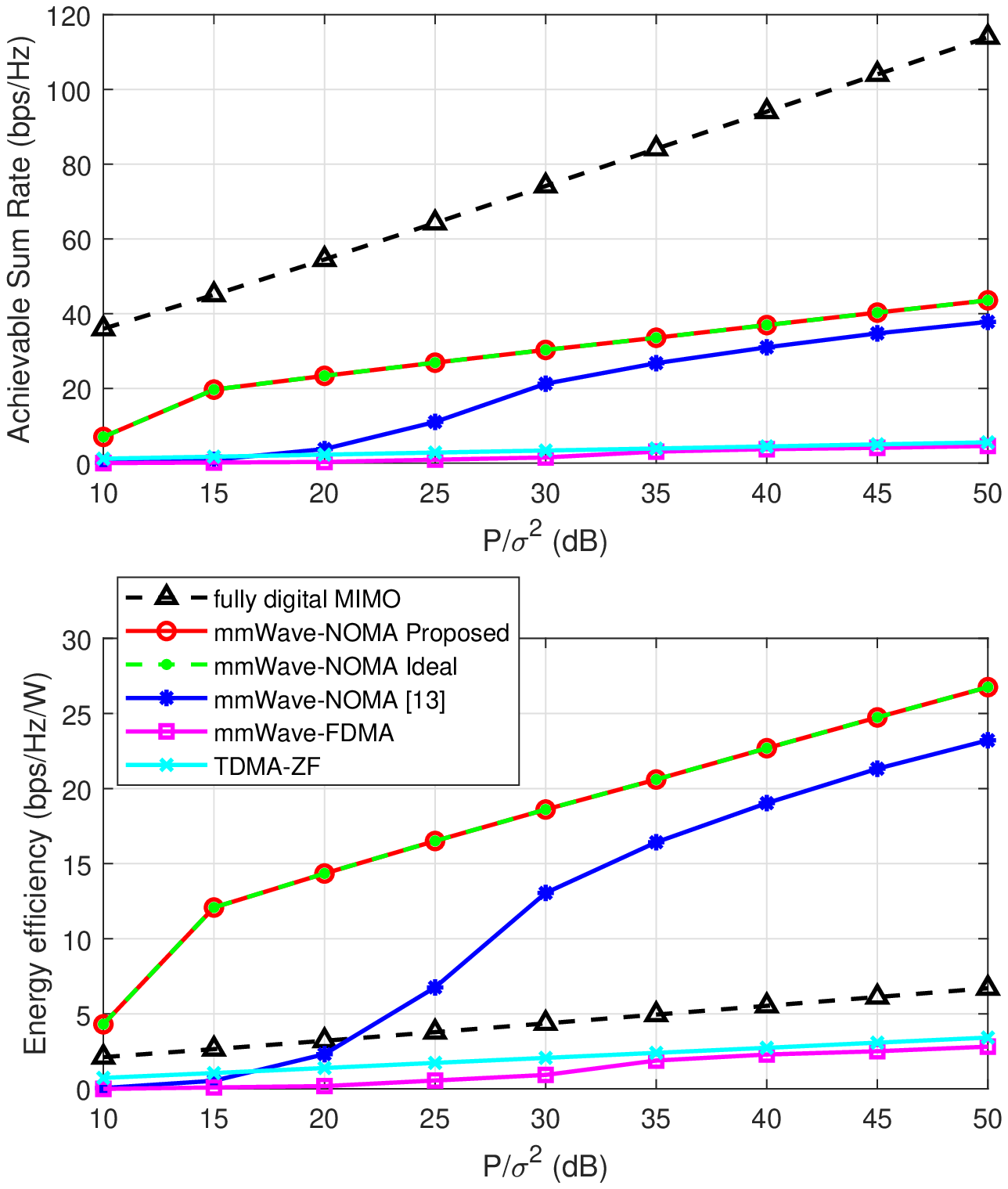}
  \caption{ASR/EE comparison between the mmWave-NOMA and mmWave-OMA systems with varying total power to noise ratio under the LOS channel model, where $N=64$, $M=2$, $K=6$, and $r_{k}=1$ bps/Hz.}
  \label{Comp_P_LOS}
\end{minipage}
\begin{minipage}[t]{0.04\linewidth}
\end{minipage}
\begin{minipage}[t]{0.48\linewidth}
\centering
\includegraphics[width=\figwidth cm]{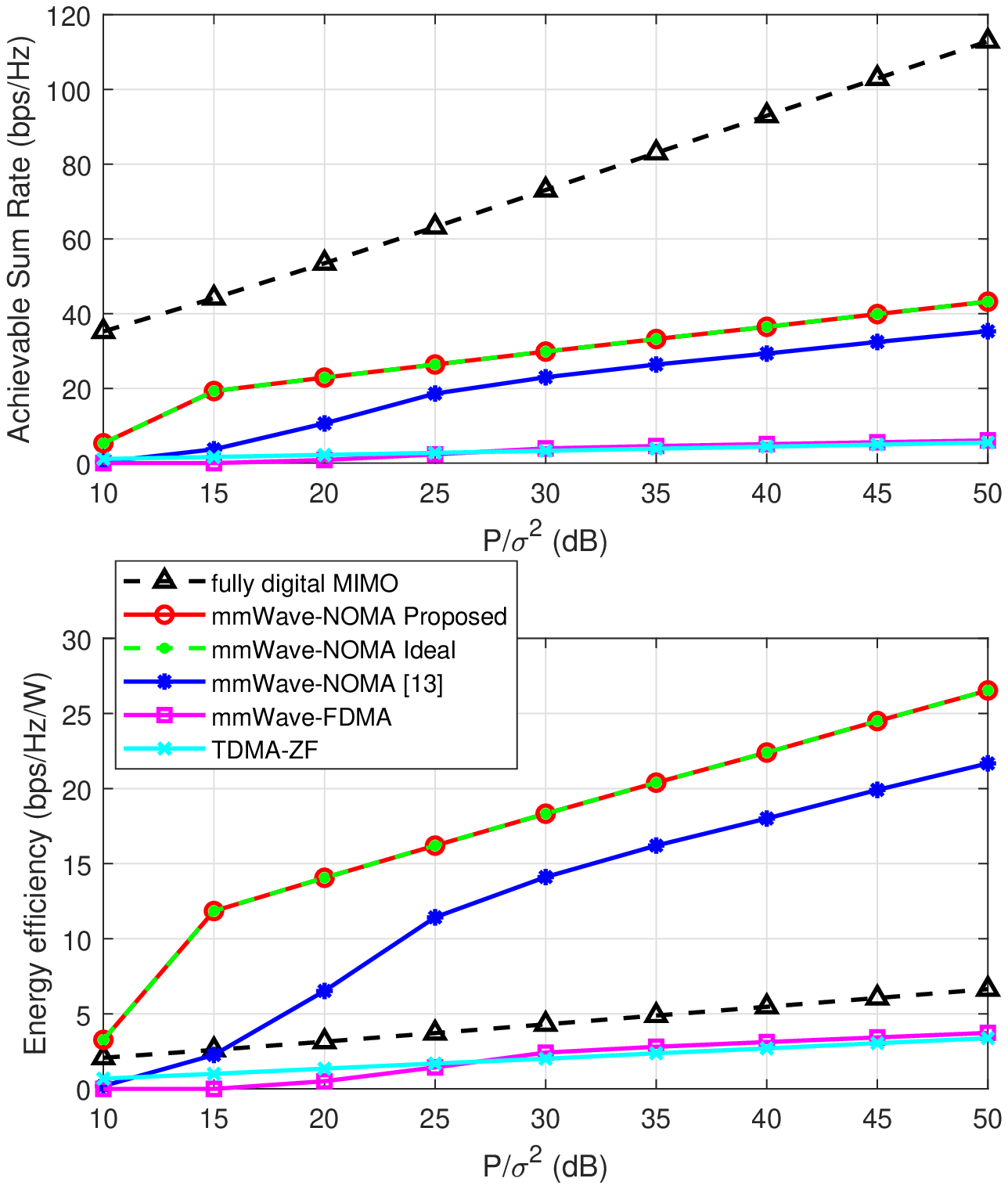}
  \caption{ASR/EE comparison between the mmWave-NOMA and mmWave-OMA systems with varying total power to noise ratio under the NLOS channel model, where $N=64$, $M=2$, $K=6$, and $r_{k}=1$ bps/Hz.}
  \label{Comp_P_NLOS}
\end{minipage}
\end{figure}

Figs. \ref{Comp_P_LOS} and \ref{Comp_P_NLOS} compare the ASRs/EEs between the proposed mmWave-NOMA approach, the mmWave-NOMA scheme in \cite{Dai2018MIMONOMA}, mmWave-OMA and fully digital MIMO with varying total power to noise ratio under the LOS channel and the NLOS channel models, respectively. From the two figures, we can find again that the proposed mmWave-NOMA approach can achieve a higher ASR than that of mmWave-NOMA in \cite{Dai2018MIMONOMA}, as well as the mmWave-OMA system. Particularly, when $P/\sigma^2$ is low, i.e., the mmWave-NOMA system is power limited, the superiority of the proposed algorithm is more conspicuous compared with the approach in \cite{Dai2018MIMONOMA}. When $P/\sigma^2$ is larger than 35 dB, the performance gap between the proposed solution and the solution in \cite{Dai2018MIMONOMA} stabilises around 5 bps/Hz in Fig. \ref{Comp_P_LOS}, while the performance gap stabilises around 7.5 bps/Hz in Fig. \ref{Comp_P_NLOS}. From the two figures, we can find again that the EE of the proposed mmWave-NOMA scheme with a HBF structure is larger than that of the fully digital MIMO structure, as well as larger than the EE of mmWave-OMA. When $P/\sigma^2$ becomes large, the curves of the EE for different schemes all tend to be linear, and the increasing velocity, i.e., the slope of the EE curve, for mmWave-NOMA is larger than that for both fully digital MIMO and mmWave-OMA.

\begin{figure}
\begin{minipage}[t]{0.48\linewidth}
\centering
\includegraphics[width=\figwidth cm]{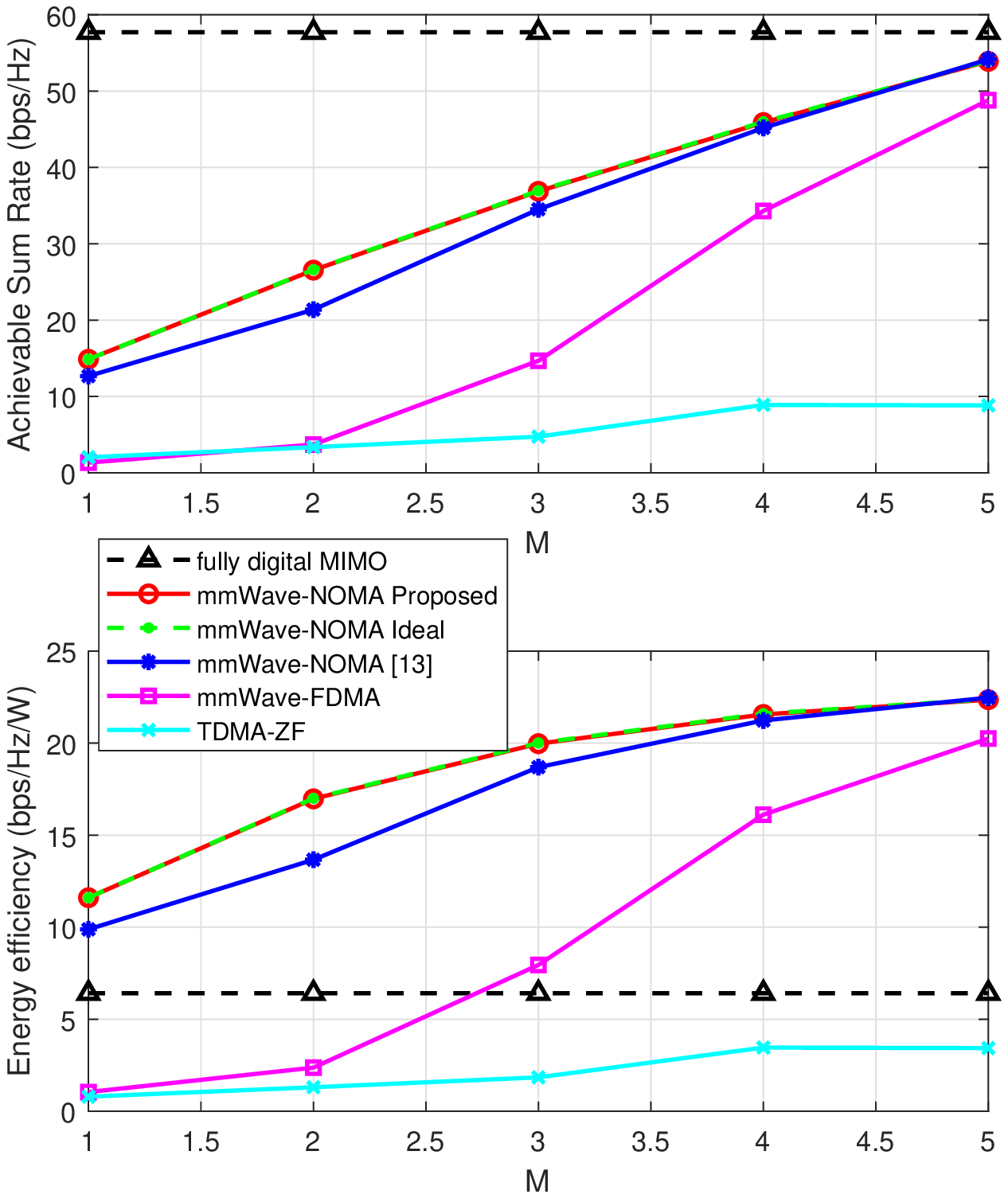}
  \caption{ASR/EE comparison between the mmWave-NOMA and mmWave-OMA systems with varying number of RF chains under the LOS channel model, where $N=16$, $K=6$, $r_{k}=1$ bps/Hz and $P/\sigma^2=30$ dB.}
  \label{Comp_RF_LOS}
\end{minipage}
\begin{minipage}[t]{0.04\linewidth}
\end{minipage}
\begin{minipage}[t]{0.48\linewidth}
\centering
\includegraphics[width=\figwidth cm]{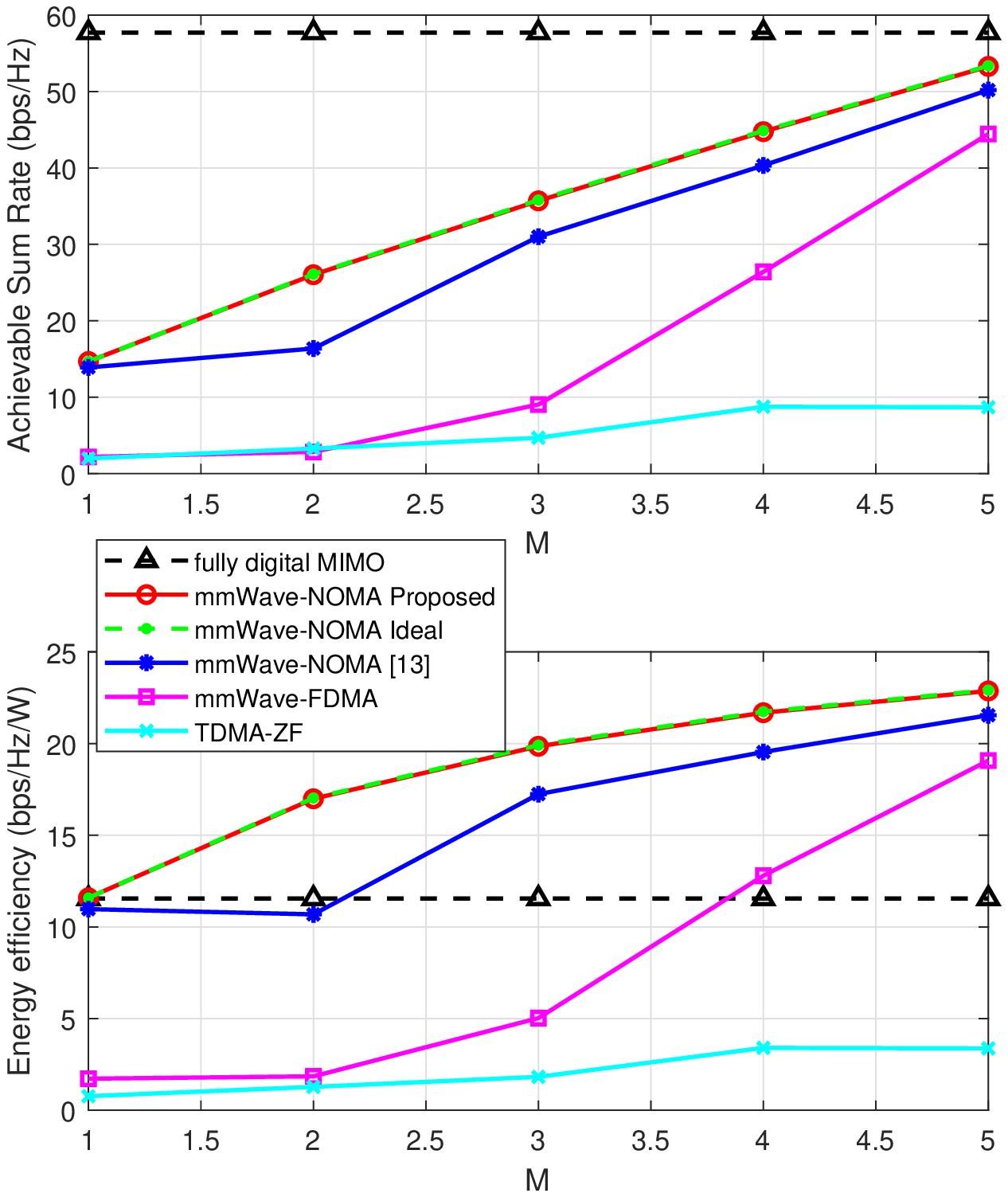}
  \caption{ASR/EE comparison between the mmWave-NOMA and mmWave-OMA systems with varying number of RF chains under the NLOS channel model, where $N=16$, $K=6$, $r_{k}=1$ bps/Hz and $P/\sigma^2=30$ dB.}
  \label{Comp_RF_NLOS}
\end{minipage}
\end{figure}

Figs. \ref{Comp_RF_LOS} and \ref{Comp_RF_NLOS} compare the ASRs/EEs between mmWave-NOMA and mmWave-OMA systems with varying number of RF chains under the LOS channel and the NLOS channel models, respectively. It can be observed that the proposed mmWave-NOMA approach outperforms the mmWave-OMA. In Fig. \ref{Comp_RF_LOS}, when the number of RF chains is no larger than 4, the ASR of the proposed approach is larger than that of mmWave-NOMA in \cite{Dai2018MIMONOMA}. When the number of RF chains is 5, the scheme in \cite{Dai2018MIMONOMA} behaves slightly better than the proposed scheme, and both of them are close to the performance of the fully digital structure. The reason is that the total number of users is 6 in Fig. \ref{Comp_RF_LOS}. When the number of RF chains becomes larger, i.e., approximately equal to the number of users, the number of users in each group is usually one. Thus, the beamforming scheme in \cite{Dai2018MIMONOMA} is more effective, where the analog beams steer to the first user in each group. Moreover, the proposed solution always outperforms the mmWave-NOMA scheme in \cite{Dai2018MIMONOMA} in Fig. \ref{Comp_RF_NLOS}. Comparing the two figures, we can find that the ASRs of the proposed approach are almost not influenced by the channel models. In contrast, the ASRs of the mmWave-NOMA scheme in \cite{Dai2018MIMONOMA} under the NLOS channel model is lower than that under the LOS channel model. The results indicate that the proposed approach is more robust against the channel model. We can also find that the EE of the proposed mmWave-NOMA scheme increases for the number of the RF chains, and it is significantly larger than the EE of the fully digital MIMO structure.

\section{Conclusion}
In this paper, we investigated the application of NOMA in mmWave communications. Particularly, we considered downlink transmission with HBF structure. First, we proposed the K-means based user grouping algorithm according to the channel correlations of the multiple users. Whereafter, a joint hybrid beamforming and power allocation problem was formulated to maximize the ASR, subject to a minimum rate constraint for each user. To solve this non-convex problem with high-dimensional variables, we first obtained a sub-optimal solution of power allocation under arbitrary fixed HBF, where the intra-GPA and inter-GPA sub-problems are solved, respectively. Then, given an arbitrary fixed ABF, we utilized the approximately zero-forcing method to design the DBF matrix to minimize the inter-group interference. Finally, the ABF problem with the CM constraint was solved by using the proposed BC-PSO algorithm. Simulation results showed that the proposed mmWave-NOMA scheme outperforms mmWave-OMA in terms of ASR, and the proposed mmWave-NOMA scheme with the HBF structure is more energy efficient compared with the fully digital MIMO structure. The proposed solution for mmWave-NOMA, including user grouping, joint power allocation and HBF, can achieve a better performance in terms of ASR and EE compared with the benchmark scheme, with the expending of a higher computational complexity.

\appendices
\section{Proof of Lemma 1}

It is obvious that $f(\{P_{m}\})$ is increasing for $P_{m} ~(1\leq m \leq M)$. Thus, the optimal solution always satisfies $\sum\limits_{m=1}^{M} P_{m} = P$. Then, Problem \eqref{eq_problem5} without the constraint $C_1$ can be solved by Lagrange Multiplier Method, where the KKT equation set is
\begin{equation}
\left\{\begin{aligned}
&\frac{\partial f}{\partial P_{m}} = \lambda,~(1\leq m \leq M)\\
&\sum\limits_{m=1}^{M} P_{m} = P.
\end{aligned}
\right.\end{equation}

Solve the equation sets above and we can obtain the optimal solution of Problem \eqref{eq_problem5} as shown in \eqref{inter_power}.

\section{Proof of Lemma 2}
We prove Lemma 2 by using contradiction. Denote the optimal solution of Problem \eqref{eq_problem5} is $\{P_{m}^{\circ}\}$. Assume there exists an index $m_{1}~(m_{1} \in \mathcal{U})$ which satisfies $P_{m_1}^{\circ}>\frac{\eta_{m_1,1}-b_{m_1}}{k_{m_1}}$. Since $m_{1} \in \mathcal{U}$, we have $P_{m_1}^{\circ}>\frac{\eta_{m_1,1}-b_{m_1}}{k_{m_1}}>P_{m_1}^{\star}$. Then, there always exists another index $m_{2}~(m_{2}\neq m_{1})$ which satisfies $P_{m_2}^{\circ}<P_{m}^{\star}$, because $\sum\limits_{m=1}^{M} P_{m}^{\circ}\leq P=\sum\limits_{m=1}^{M}P_{m}^{\star}$. Consider the power allocation of
\begin{equation}
\left\{
\begin{aligned}
&P_{m_1}^{\prime}=P_{m_1}^{\circ}-\epsilon \\
&P_{m_2}^{\prime}=P_{m_2}^{\circ}+\epsilon \\
&P_{m}^{\prime}=P_{m}^{\circ}, ~ m \neq m_1,m_2,
\end{aligned}
\right.
\end{equation}
where $\epsilon$ is a nonnegative and small number.

The partial derivative of the objective function is
\begin{equation}\label{pdiff}
\frac{\partial f}{\partial P_{i}}=\frac{1}{\ln 2}\frac{k_{i}}{ (k_{i}P_{i}+b_{i}+1)}, ~1\leq i\leq M,
\end{equation}
which is a monotone decreasing function of $P_{i}$. Since $P_{m_1}^{\circ}>P_{m_1}^{\star}$ and $P_{m_2}^{\circ}<P_{m_2}^{\star}$ , we have
\begin{equation}
\begin{aligned}
&\frac{\partial f}{\partial P_{m_1}}|\{P_{m}=P_{m}^{\circ}\}<\frac{\partial f}{\partial P_{m_1}}|\{P_{m}=P_{m}^{\star}\} \\
&\frac{\partial f}{\partial P_{m_2}}|\{P_{m}=P_{m}^{\circ}\}>\frac{\partial f}{\partial P_{m_2}}|\{P_{m}=P_{m}^{\star}\}
\end{aligned}
\end{equation}

Define $g(\epsilon)=f (\{P_{m}^{\prime}\})-f (\{P_{m}^{\circ}\})$. It is easy to verify that $g(0)=0$. The derivative of the function $g(\epsilon)$ is
\begin{equation}
\begin{aligned}
\frac{d~g}{d~\epsilon}&= \frac{d~f (\{P_{m}^{\prime}\})}{d~\epsilon}= \frac{\partial f}{\partial P_{m_2}}|\{P_{m}=P_{m}^{\circ}\}-  \frac{\partial f}{\partial P_{m_1}}|\{P_{m}=P_{m}^{\circ}\}\\
&> \frac{\partial f}{\partial P_{m_2}}|\{P_{m}=P_{m}^{\star}\}- \frac{\partial f}{\partial P_{m_1}}|\{P_{m}=P_{m}^{\star}\}=0,
\end{aligned}
\end{equation}
which means that $g(\epsilon)$ is a monotone increasing function of $\epsilon$. We can select a sufficiently small $\epsilon$ which satisfies $P_{m_1}^{\prime}>P_{m_1}^{\star}$, $P_{m_2}^{\prime}<P_{m_2}^{\star}$ and $g(\epsilon)=f (\{P_{m}^{\prime}\})-f (\{P_{m}^{\circ}\})>0$. In other words, $\{P_{m}^{\prime}\}$ is better than $\{P_{m}^{\circ}\}$. It contradicts to the assumption that $\{P_{m}^{\circ}\}$ is the optimal solution of Problem \eqref{eq_problem5}. Thus, we can conclude that for $\forall m \in \mathcal{U}$, the optimal solution of Problem \eqref{eq_problem5} should always satisfy $P_{m}^{\circ}=\frac{\eta_{m,1}-b_{m}}{k_{m}}$.

\bibliographystyle{IEEEtran} 
\bibliography{IEEEabrv,Xiao60GHz,Xiao5GnNOMA}

\end{document}